\DeclareMathOperator{\spn}{span}
\DeclareMathOperator{\Ima}{Im}
\newcommand{\e}{{\rm e}}
\newcommand{\im}{{\rm i}}
\newcommand{\E}{{\mathbb E}}
\newcommand{\Pa}{{\mathbb P}}
\newcommand{\Q}{{\mathbb Q}}
\newcommand{\C}{{\mathbb C}}
\newcommand{\R}{{\mathbb R}}
\newcommand{\M}{{\mathbb M}}
\newcommand{\N}{{\mathbb N}}
\newcommand{\Ecal}{{\mathcal E}}
\newcommand{\Fcal}{{\mathcal F}}
\newcommand{\Hcal}{{\mathcal H}}
\newcommand{\Ical}{{\mathcal I}}
\newcommand{\Ncal}{{\mathcal N}}
\newcommand{\Scal}{{\mathcal S}}
\newtheorem{proposition}{Proposition}[section]
\newtheorem{lemma}[proposition]{Lemma}
\newtheorem{theorem}[proposition]{Theorem}
\newtheorem{definition}[proposition]{Definition}
\newtheorem{remark}[proposition]{Remark}
\newtheorem{exampleemph}[proposition]{Example}   
\begin{document}

\title{Machine learning with kernels for portfolio valuation and risk management\footnote{We thank participants at the Bachelier Finance Society One World Seminar, the SFI Research Days 2020, the CFM-Imperial Quantitative Finance seminar, the David Sprott Distinguished Lecture at Waterloo University, the Workshop on Replication in Life Insurance at Technical University of Munich, the SIAM Conference on Financial Mathematics and Engineering 2019, the 9th International Congress on Industrial and Applied Mathematics, and Kent Daniel, R\"udiger Fahlenbrach, Lucio Fernandez-Arjona, Kay Giesecke, Enkelejd Hashorva, Mike Ludkovski, Markus Pelger, Antoon Pelsser, Simon Scheidegger, Ralf Werner, and two anonymous referees for their comments.}}
\author{ Lotfi Boudabsa\footnote{EPFL. Email: lotfi.boudabsa@epfl.ch} \and Damir Filipovi\'c\footnote{EPFL and Swiss Finance Institute. Email: damir.filipovic@epfl.ch}}
\date{8 April 2021}
\maketitle

\begin{abstract}
We introduce a simulation method for dynamic portfolio valuation and risk management building on machine learning with kernels. We learn the dynamic value process of a portfolio from a finite sample of its cumulative cash flow. The learned value process is given in closed form thanks to a suitable choice of the kernel. We show asymptotic consistency and derive finite sample error bounds under conditions that are suitable for finance applications. Numerical experiments show good results in large dimensions for a moderate training sample size.
\end{abstract}

\noindent {\bf Keywords:} dynamic portfolio valuation, kernel ridge regression, learning theory, reproducing kernel Hilbert space, portfolio risk management\\

\noindent {\bf MSC (2010) Classification:} 68T05, 91G60\\

\noindent {\bf JEL Classification:} C15, G32

\section{Introduction}\label{secintro}

Valuation, risk measurement, and hedging form an integral task in portfolio risk management for banks, insurance companies, and other financial institutions. Portfolio risk arises because the values of constituent assets and liabilities of the portfolio change over time in response to changes in the underlying risk factors, e.g., interest rates, equity prices, real-estate prices, foreign exchange rates, credit spreads, etc. The quantification and management of this risk requires a stochastic model of the dynamic portfolio value process.

Most stochastic dynamic models applied in finance can be brought into the following form: an economy with finitely many time periods $t=0,1,\dots,T$, where randomness is generated by some underlying stochastic driver $X=(X_0,\dots,X_T)$. The components $X_t$ are mutually independent, but not necessarily identically distributed, taking values in some measurable spaces $(E_t,\Ecal_t)$. We assume that $X$ is realized on the measurable path space $(E,\Ecal)$, with $E=E_0\times\cdots\times E_T$ and $\Ecal=\Ecal_0\otimes\cdots\otimes\Ecal_T$, such that $X_t(x)=x_t$ for a generic sample point $x=(x_0,\dots,x_T)\in E$. We denote the distribution of $X$ by $\Q(dx) = \Q_0(dx_0)\times\cdots\times \Q_T(dx_T)$, and we assume that $\Q$ represents the risk-neutral pricing measure with respect to some fixed numeraire, such as the money market account. All financial values and cash flows are discounted by this numeraire, if not otherwise stated. The stochastic driver $X$ generates the filtration $\Fcal_t=\Ecal_0\otimes\cdots\otimes\Ecal_t$, which represents the flow of information.

We consider a portfolio whose cumulative cash flow is modeled by some measurable function $f:E\to\R$ such that $f\in L^2_\Q$. Its dynamic value process $V $ is then given by the martingale
\begin{equation}\label{eqcondY}
 V_t=\E_\Q[f(X) \mid\Fcal_t] , \quad t=0,\dots,T.
\end{equation}
Computing $V$ is a notorious challenge, as the conditional expectations \eqref{eqcondY} usually lack analytic solutions. Examples of such portfolios include path-dependent options, structured products, such as barrier reverse convertibles and mortgage-backed securities. Examples also include insurance asset-liability portfolios whose terminal value $f(X)=\sum_{t=1}^T \zeta_t$ is given by accumulating its cash flows $\zeta_t=\zeta_t(X_0,\dots,X_t)\in L^2_\Q$, which are projected in simulations of $X$ and take both financial and insurance risk factors, such as mortality and longevity risks, into account. Similarly, this also includes derivatives trading books held by banks.

This is a very general setup. As an illustrative example, we recall here the multivariate Black--Scholes model, where $X_t$ are i.i.d.\ standard Gaussians on $E_t=\R^d$, for some $d\in\N$. The $d$ nominal stock prices are given by
\begin{equation}\label{SeqBS}
  S_{i,t}=S_{i,t-1}\exp[ \sigma_i^\top X_t \sqrt{{\Delta_t}} + (r- \|\sigma_i\|^2 /2){\Delta_t}],
\end{equation}
for some initial values $S_{i,0}$ and volatility vectors $\sigma_i\in\R^d$, $i=1,\dots,d$, where $r$ is the risk-free rate and $\Delta_t$ denotes the time step size from $t-1$ to $t$ in units of a year. Options on $S$ lack analytic solutions in general. An example is the European max-call option with discounted payoff
\begin{equation}\label{exEMC}
  f(X)=\e^{-r\sum_{t=1}^T \Delta_t} (\max_{i} S_{i,T} -  K)^+,
\end{equation}
for some strike price $K$. We will study this and other examples in more detail below. But note that most of our results apply beyond the Black--Scholes model.

Indeed, we propose a machine learning approach based on kernels with dimension-free error bounds to efficiently compute $V$ in the above general setup. It consists of two steps. First, we approximate $f$ by some function $f_\lambda$ in $L^2_\Q$, where $\lambda> 0$ is a regularization parameter. More specifically, we define $f_\lambda$ as the $\lambda$-regularized projection of $f$ on a suitably chosen reproducing kernel Hilbert space (RKHS) embedded in $L^2_\Q$. Second, we learn $f_\lambda$ from a finite sample $\bm X=(X^{(1)},\dots,X^{(n)})$, drawn from an appropriately chosen equivalent sampling measure $\widetilde\Q\sim\Q$, along with the corresponding function values $f(X^{(i)})$, $i=1,\dots,n$.\footnote{More precisely, $\bm X$ consists of i.i.d.\ $E$-valued random variables $X^{(i)}\sim \widetilde \Q$ defined on the product probability space $(\bm E , \bm\Ecal,  {\bm  Q})$ with $ \bm E = E \otimes E \otimes\cdots $, $ \bm \Ecal =\Ecal \otimes\Ecal\otimes\cdots $, and $  {\bm  Q} = \widetilde \Q \otimes \widetilde \Q\otimes\cdots $.} This yields a sample estimator $f_{\bm X}$ of $f_\lambda$. A suitable choice of the RKHS asserts that the sample estimator
\begin{equation}\label{eqYhat}
  V_{\bm X,t}=\E_\Q[f_{\bm X}(X) \mid\Fcal_t ],\quad t=0,\dots, T,
\end{equation}
of the value process $V$ is given in \emph{closed form}, in the sense that it can be efficiently evaluated at very low computational cost.

How good is this estimator? In view of Doob's maximal inequality, see, e.g., \cite[Corollary II.1.6]{rev_yor_94}, the resulting path-wise maximal $L^2_\Q$-estimation error is bounded by
\begin{equation}\label{doobineq}
\frac{1}{2} \|\max_{t=0,\dots,T} | V_t-V_{\bm X,t}|\|_{2,\Q}\le   \| f -f_{\bm X} \|_{2,\Q}   \le \underbrace{\|f-f_{\lambda}\|_{2,\Q}}_{\text{approximation error}}+\underbrace{\|f_{\bm X} - f_{\lambda}\|_{2,\Q}}_{\text{sample error}}.
\end{equation}
The regularization parameter $\lambda$ can be used to trade off bias for variance and can be chosen optimally through an out of sample validation. More specifically, we show the asymptotic result that the \emph{approximation error} $\|f-f_{\lambda}\|_{2,\Q}$ is minimized as $\lambda\to 0$, and we derive limit theorems and bounds for the \emph{sample error} $\|f_{\bm X} - f_{\lambda}\|_{2,\Q}$. Specifically, we prove asymptotic consistency, $f_{\bm X}\xrightarrow{a.s.} f_\lambda$, and a central limit theorem for $f_{\bm X}- f_\lambda$ in $L^2_\Q$, as the sample size $n\to \infty$. We also derive a finite sample guarantee: for any $\eta\in (0,1]$, there exists a constant $C(\eta)$ such that $ \| f_{\bm X} - f_\lambda\|_{2,\Q}<C(\eta)/\sqrt{n}$ with sampling probability of at least $1-\eta$. All sample error bounds are dimension-free and given by explicit, simple and intuitive expressions in terms of the approximation error $f-f_\lambda$. The smaller the approximation error, the smaller the sample error bounds.

Applications in portfolio risk management are manifold. For a date $t$ we denote by $\Delta V_t  = V_{t}-V_{t-1}$ the gain from holding the portfolio over period $[t-1,t]$. Portfolio risk managers and financial market regulators alike aim to quantify the risk in terms of an $\Fcal_{t-1}$-conditional risk measure, such as value at risk or expected shortfall, evaluated at the loss $-\Delta V_{t}$.\footnote{For the definition of value at risk and expected shortfall (also called conditional value at risk or average value at risk), we refer to \cite[Section~4.4]{foe_sch_04}. See also Section \ref{secexamples} below.} These risk measures refer to the equivalent real-world measure $\Pa\sim\Q$. This calls for a bound on the path-wise maximal $L^1_\Pa$-estimation error, which we readily obtain by combining \eqref{doobineq} with the Cauchy--Schwarz inequality, $\|\max_{t } | V_t-V_{\bm X,t}|\|_{1,\Pa}\le \| \frac{d\Pa}{d\Q}\|_{2,\Q}  \|\max_{t } ( V_t-V_{\bm X,t})\|_{2,\Q}$. Indeed, this provides a bound on the estimation error for risk measures which are continuous with respect to the $L^1_\Pa$-norm, such as value at risk (under mild technical conditions) and expected shortfall, see e.g. \cite[Section 6]{cam_fil_17}.

Another important task of portfolio risk management is hedging. The risk exposure from holding the portfolio over period $[t-1,t]$ can be mitigated by replicating its value process through dynamic trading in liquid financial instruments. Let $G$ be a vector of $L^2_\Q$-martingales that models the discounted value processes of tradeable financial instruments. We find the $\Q$-variance optimal hedging strategy by projecting $\Delta V_{t}$ on the gains of the financial instruments $\Delta G_{t}=G_t-G_{t-1}$, that is, by minimizing $\E_\Q[(\psi^\top_{t-1} \Delta G_{t} - \Delta V_{t})^2\mid\Fcal_{t-1}]$ over all $\Fcal_{t-1}$-measurable vectors $\psi_{t-1}$. The solution is given by 
\begin{equation}\label{psieqhedge}
 \psi_{t-1} = \E_\Q[ \Delta G_{t}  \Delta G_{t}^\top\mid\Fcal_{t-1}]^{ -1} \,\E_\Q[ \Delta G_{t} \Delta V_{t}\mid\Fcal_{t-1}],
\end{equation}
see, e.g., \cite[Chapter 10]{foe_sch_04}.

Summarizing, for either of these portfolio risk management tasks, we have to compute the dynamic value process $V$. This is a computational challenge, as the conditional expectations \eqref{eqcondY} usually lack analytic solutions. What's more, in real-life applications in the portfolio management industry, the point-wise evaluation of $f$ is costly, because it queries from various constituent sub-portfolios, which in practice are often not implemented on one integrated platform. For illustration, a technical report of the German Actuarial Society \cite{dav_15} reports as typical sample size in practice of $n=1000$ to $5000$. Similarly, \cite{hug_sav_20} state that learning effectively from small datasets is critical in the context of regulations of complex derivatives trading books held by banks. In practice, this amounts to sample sizes of $n=1000$ to $32000$, as reported in \cite{hug_sav_19}. Facing a limited computing budget calls for an efficient method to approximate and learn the value process $V$ from a (small) finite sample and in such a way that the sample estimator is given in closed form, such as in \eqref{eqYhat}. This is exactly what our paper provides. As for the dimension of the path space $E$, we target a range of $dT\le 36$, which can be considered high-dimensional in practical terms, as discussed in \cite[Section~5]{fer_fil_2020}.

Our paper builds on the vast literature on machine learning with kernels, which has its roots in the early works of James Mercer \cite{mer_1909} and Stefan Bergman \cite{ber_1922} who studied integral operators related to kernels. The basic theory of RKHS's was developed in the seminal paper \cite{aro_50}. Kernels were rediscovered by the machine learning community in the 1990s and utilized for nonlinear classification \cite{bos_guy_vap_92} and nonlinear PCA \cite{sch_smo_mue_98}. This boosted an extensive research activity on kernel based learning. \cite{sun_05} and \cite{ste_sco_12} provide a systematic functional analysis of kernels on general (i.e., non-compact) domains, \cite{dev_ros_cap_05} connect the theories of statistical learning and ill-posed problems via Tikhonov regularization, \cite{ros_bel_dev_10} study convergence of integral operators using a concentration inequality for Hilbert space-valued random variables. Our sample estimators are based on kernel ridge regression, which is discussed in detail in, e.g., \cite{cuck_smale_2001, wu_et_al_2006}. We add to this literature by developing a tailor-made framework of kernel ridge regression for dynamic portfolio valuation and risk management. To the best of our knowledge, related results in the machine learning literature are derived under stringent assumptions on either $f$ (e.g., bounded and smooth in \cite{ras_sam_2017,cap_dev_2007}) or $E$ (e.g., compact in \cite{lin_et_al_2018}), which do not hold in applications in finance. This is evident from the simple example~\eqref{exEMC} above. Moreover, we exploit the celebrated kernel representer theorem for obtaining closed form estimators of the value process. Modern introductory texts to machine learning with kernels include \cite{sch_smo_02,bis_06, cuc_zho_07, hof_sch_smo_08, ste_chr_08, pau_rag_16}. For the convenience of the reader we recall the essentials of Hilbert spaces, and RKHS's in particular, in the appendix.

The literature related to portfolio risk measurement includes  \cite{bro_du_moa_15} who introduce a regression-based nested Monte Carlo simulation method for the estimation of the unconditional expectation of a Lipschitz continuous function $f(L)$ of the 1-year loss $L=-\Delta V_{1}$. They also provide a comprehensive literature overview of nested simulation problems, including \cite{gor_jun_10} who improve the speed of the convergence of the standard nested simulation method using the jackknife method. Our method is different as it learns the entire value process $V$ in one go, as opposed to any method relying on nested Monte Carlo simulation, which estimates $V_t$ for one fixed $t$ at a time.

Specific literature on insurance liability portfolio replication includes \cite{nat_wer_14,pel_sch_16,cam_fil_18}. Learning functions in the context of uncertainty quantification includes \cite{coh_mig_17}. These papers have in common that they project $f$ on a finite set of basis functions. As such they are contained in our unified framework as special cases of finite-dimensional RKHS's with $\lambda=0$. An infinite-dimensional approach is given in \cite{ris_lud_16,ris_lud_18}, who learn the value process using Gaussian process regression.

Here and throughout we use the following conventions and notation. For a probability space $(E,\Ecal,\Q)$, for $p\in [1,\infty]$, and for measurable functions $f,g:E\to\R$, we denote
\[ \|f\|_{p,\Q}=\begin{cases}
  (\int_E |f(x)|^p\Q(dx))^{1/p},& p<\infty,\\
  \inf\{c \ge 0\mid \text{$|f|\le c$ $\Q$-a.s.}\}, & p=\infty,
\end{cases}\]
and $\langle f,g\rangle_{\Q}=\int_E f(x)g(x)\Q(dx)$, whenever $\|fg\|_{1,\Q}<\infty$. We denote by $L^p_\Q$ the space of \emph{$\Q$-equivalence classes} of measurable functions $f:E\to\R$ with $\|f\|_{p,\Q}<\infty$. If not otherwise stated, we will use the same symbol, e.g., $f$, for a function and its equivalence class. Every $L^p_\Q$ is a separable Banach space with norm $\|\cdot\|_{p,\Q}$, and $L^2_\Q$ is a separable Hilbert space with inner product $\langle \cdot,\cdot\rangle_{\Q}$. We denote by $\|y\|=\sqrt{y^\top y}$ the Euclidian norm of a coordinate vector $y$. Various operator norms on Hilbert spaces are introduced in Section~\ref{seccopoH}.

The remainder of the paper is as follows. Section~\ref{secapprox} discusses the kernel-based approximation of $f$. Section~\ref{secFSE} contains the sample estimation and error bounds. Section~\ref{seccompnew} provides computational formulas for the sample estimator and gives the estimated value process in closed form. Section~\ref{secTK} presents a large class of tractable kernels. Section~\ref{secexamples} provides numerical examples for the valuation of path-dependent, exotic options in the Black--Scholes model. In particular, we compute value at risk and expected shortfall of long and short positions, and we sketch the implementation of the $\Q$-variance optimal hedging. Section~\ref{secconc} concludes. Section~\ref{secfactsH} recalls some facts about Hilbert spaces, including the essentials of RKHS's, compact operators, and random variables in Hilbert spaces. Section~\ref{secproofs} contains all proofs from the main text. Sections~\ref{secdimL2n} and \ref{secdimH} are auxiliary and discuss in more detail the cases where the target space and the RKHS are finite dimensional, respectively. Section \ref{appregnow} briefly discusses the regress-now approach introduced in \cite{gla_yu_04} and compares it with our approach.

\section{Approximation}\label{secapprox}

As in Section~\ref{secintro}, we let $f\in L^2_\Q$ be a given function, which models the payoff, or cumulative cash flow, of a portfolio. We approximate and learn $f$ through the choice of an appropriate hypothesis space $\Hcal$ embedded in $L^2_\Q$. Thereto, we choose a \emph{kernel} $k$ on $E$. That is, a function $k:E\times E\to\R$ such that, for any finite selection of points $x_1,\dots,x_n\in E$, the $n\times n$-matrix with entries $k(x_i,x_j)$ is symmetric and positive semidefinite. By Moore's theorem \cite[Theorem 2.14 and Proposition 2.3]{pau_rag_16}, there exists a unique \emph{reproducing kernel Hilbert space (RKHS)} $\Hcal$ with kernel $k$. That is, a Hilbert space $\Hcal$ consisting of functions $h:E\to\R$ such that $k(\cdot,x)$ is in $\Hcal$ and acts as pointwise evaluation, $\langle h,k(\cdot,x) \rangle_\Hcal=h(x)$, for all $x\in E$. We collect some basic properties of RKHS in Section~\ref{secfactsH}.

Throughout, we assume that $k:E\times E\to\R$ is measurable and $\Hcal$ is separable.\footnote{Sufficient conditions for separability of an RKHS are given in Lemma~\ref{lemsepHnew}.} Then $\Hcal$ consists of measurable functions, see \cite[Theorem~90]{ber_tho_2004}.  We also assume that $\kappa(x)=\sqrt{k(x,x)}=\|k(\cdot,x)\|_\Hcal$ is square-integrable, \begin{equation}\label{ass0}
  \|\kappa\|_{2,\Q}<\infty.
\end{equation}
From the elementary bound
\begin{equation}\label{eqfundamentalpnew}
|h(x)|\le\kappa(x)\|h\|_{\Hcal},\quad x\in E,
\end{equation}
we then infer that the linear operator $J:\Hcal\to L^2_\Q$ that maps $h\in\Hcal$ to its $\Q$-equivalence class is well-defined and Hilbert--Schmidt with norm $\|J\|_2=\|\kappa\|_{2,\Q}$, see \cite[Lemma~2.3]{ste_sco_12}.\footnote{By \eqref{eqfundamentalpnew}, we have that $J:\Hcal\to L^p_\Q$ is a bounded operator with $\|J\|\le   \|\kappa\|_{p,\Q}$, for any $p\le\infty$ such that $\|\kappa\|_{p,\Q}<\infty$.\label{fn1}} It is well known, see \cite[Lemma~2.2]{ste_sco_12}, that the adjoint operator $J^\ast:L^2_\Q\to\Hcal$ satisfies
  \begin{equation}\label{eqJastgnew}
   J^\ast g =\int_E k(\cdot,x) g(x)\Q(dx),\quad  g\in L^2_\Q .
  \end{equation}

We can now approximate $f$ in $L^2_\Q$ by the solution $h=f_\lambda\in\Hcal$ to the regularized projection problem
\begin{equation}\label{KRR}
  \min_{h\in\Hcal} (\|f-h\|^2_{2,\Q} + \lambda\|h\|^2_{\Hcal}),
\end{equation}
for some regularization parameter $\lambda> 0$. There are two arguments for adding the penalization term $\lambda\|h\|^2_{\Hcal}$ in the objective function~\eqref{KRR}. First, we avoid overfitting when $\Hcal$ is relatively ``large'' compared to $L^2_\Q$, in the sense that $\overline{\Ima J} = L^2_\Q$, which happens in particular when $\dim (L^2_\Q)<\infty$, as described in Section~\ref{secdimL2n} and the sample estimation below. Second, problem~\eqref{KRR} has always a unique solution $h=f_\lambda\in\Hcal$ and it is given by
\begin{equation}\label{eqKRR}
  f_\lambda = (J^\ast J+\lambda)^{-1} J^\ast f,
\end{equation}
see \cite[Theorem 5.1]{eng_et_al_1996}. It readily follows from \eqref{eqJastgnew} and \eqref{eqKRR} that $f_\lambda$ can be represented as
\begin{equation}\label{hsg}
 f_\lambda =J^\ast g_\lambda =\int_E k(\cdot,x) g_\lambda(x)\Q(dx)
\end{equation}
where
\begin{equation}\label{eqKRRalt}
  g_\lambda = (JJ^\ast+\lambda)^{-1}  f.
\end{equation}

Equation~\eqref{hsg} is known as \emph{representer theorem}, see, e.g., \cite[Section 8.6]{pau_rag_16}. It yields an important lemma for applications in finance, as we shall see next. For the definition of kernel embeddings of distributions we refer to \cite{sri_etal_10}.
\begin{definition}
 We call the kernel $k$ \emph{tractable} if the conditional kernel embeddings $M_t(y)=\E_\Q[k(X,y)\mid\Fcal_t]$ are given in closed form, for all $y\in E$, $t=0,\dots,T$.
\end{definition}

\begin{lemma}\label{lemclosedform}
 Assume that $k$ is tractable and let $g_\lambda$ be given by \eqref{eqKRRalt}. Then
\begin{equation}\label{intreprescf}
  \E_\Q[f_\lambda(X)\mid\Fcal_t] =  \int_E M_t(y) g_\lambda(y) \Q(dy)
\end{equation}
is given in closed form, subject to $\Q$-integration, for all $t=0,\dots,T$.\footnote{The integral in \eqref{intreprescf} boils down to a finite sum in the sample estimation of $f_\lambda$ below, see Lemma~\ref{lemcompnewX}.}
\end{lemma}

We now discuss the limit $\lambda\to 0$. Thereto, we denote by $f_0\in \overline{\Ima J}$ the orthogonal projection of $f$ onto $\overline{\Ima J}$ in $L^2_\Q$. By orthogonality of $f - f_0$ and $f_0 - f_\lambda$ in $L^2_\Q$, we can decompose the squared \emph{approximation error}
\[ \| f - f_\lambda\|^2_{2,\Q} = \| f - f_0\|^2_{2,\Q}+\| f_0 - f_\lambda\|^2_{2,\Q}\]
into the sum of the squared \emph{projection error} $\| f - f_0\|_{2,\Q}$ and the squared \emph{regularization error} $\| f_0 - f_\lambda\|_{2,\Q}$. The next result is well known and shows that the regularization error converges to zero as $\lambda\to 0$, albeit the convergence may be slow, see \cite[Proposition 4]{dev_ros_cap_05}.\footnote{In fact, $\{J(J^\ast J+\lambda)^{-1} J^\ast\mid\lambda >0\}$ is a bounded family of operators on $L^2_\Q$, with $\|J(J^\ast J+\lambda)^{-1} J^\ast  \|\le 1$ by Section~\ref{ssecHS}, which converges weakly to the projection operator onto $\overline{\Ima J}$, $f_\lambda\to f_0$ as $\lambda\to 0$, but not so in operator norm in general. }

\begin{lemma}\label{lemconv}
 $\| f_0 - f_\lambda\|_{2,\Q}\to 0$ as $\lambda\to 0$.
\end{lemma}

In view of Lemma~\ref{lemconv}, the following property of $k$ is desirable because it implies a zero projection error, $f_0=f$, so that the approximation error converges to zero as $\lambda\to 0$.\footnote{Universal kernels have been introduced by \cite{ste_02, mic_xu_zha_06}. See also \cite{sri_fuk_lan_10}.}

\begin{definition}\label{defL2UK}
The kernel $k$ is called \emph{$L^2_\Q$-universal} if $\overline{\Ima J}=L^2_\Q$.
\end{definition}

We discuss the special cases of a finite-dimensional target space $L^2_\Q$ and a finite-dimensional RKHS $\Hcal$ in more detail in Sections~\ref{secdimL2n} and \ref{secdimH}.

A standard assumption in the machine learning literature is that $f_0\in\Ima J$, which holds if and only if problem \eqref{KRR} has a solution for $\lambda=0$. Under this regularity assumption, one can derive rates of convergence in Lemma~\ref{lemconv}, see, e.g., \cite{cap_dev_2007}. However, note that this assumption is quite restrictive and difficult to verify in practice, unless the RKHS $\Hcal$ is finite dimensional.\footnote{As $J:\Hcal\to  L^2_\Q$ is a compact operator, by the open mapping theorem, we have that $\overline{\Ima J}=\Ima J$ if and only if $\dim(\Ima J)<\infty$. In this case, obviously, $f_0\in \Ima J$.} In this paper, we thus abstain from this assumption. We henceforth acknowledge the approximation error for a given $\lambda>0$, which thanks to Lemma~\ref{lemconv} and Definition~\ref{defL2UK} can be assumed arbitrarily small, and focus on the sample error in the sequel.

\section{Sample estimation}\label{secFSE}

We next learn the approximation $f_\lambda$ from a finite sample. The previous machine learning literature has derived sample error bounds under regularity and boundedness assumptions on $f$ and $k$ that do not hold for finance applications in general. We thus add to the literature with the following setup.

We therefore transform $f$ and $k$ into bounded functions and compensate for this transformation by sampling under some alternative measure, if necessary. Specifically, we fix an equivalent sampling measure $\widetilde\Q\sim\Q$ with Radon--Nikodym derivative $w=d{\widetilde\Q}/d\Q$, and we define the measurable function ${\widetilde f}=f/\sqrt{w}$ and measurable kernel $\widetilde k(x,y)=k(x,y)/\sqrt{w(x) w(y)}$. We assume that $w$ is chosen such that
\begin{equation}
    \| \widetilde f\|_{\infty,\Q}<\infty\label{newasstildef}
\end{equation}
and
\begin{equation}
   \| \widetilde\kappa\|_{\infty,\Q}<\infty \label{newasstildek}
\end{equation}
where we define $\widetilde\kappa(x)=\sqrt{\widetilde k(x,x)}=\kappa(x)/\sqrt{w(x)}$.\footnote{As in footnote \ref{fn1}, in view of \eqref{eqfundamentalpnew} and \eqref{newasstildek}, we necessarily have $\|\kappa\|_{p,\Q}\le \|\sqrt{w}\|_{p,\Q}\| \widetilde\kappa\|_{\infty,\Q}<\infty$, for any $p\le\infty$ such that $\|\sqrt{w}\|_{p,\Q}<\infty$. The last obviously holds for $p=2$.}

We denote by $\widetilde\Hcal$ the RKHS corresponding to $\widetilde k$. It is readily seen that the linear operator $U:L^2_{\widetilde\Q}\to L^2_{\Q}$ given by $U   g =\sqrt{w}   g$ is an isometry, with $U^{-1} g=U^\ast g=g/\sqrt{w}$. Hence $\|\widetilde f\|_{2,\widetilde \Q}=\|f\|_{2,\Q}$ and $\|\widetilde\kappa\|_{2,\widetilde\Q}=\|\kappa\|_{2,\Q}$. Moreover, from \cite[Proposition 5.20]{pau_rag_16} we infer that the linear operator $T:\widetilde\Hcal\to \Hcal$ given by $T   h = \sqrt{w}   h$ is an isometry, with $T^{-1}h=T^\ast h = h/\sqrt{w}$. As a consequence, $\widetilde\Hcal$ is separable and the following diagram commutes, in the sense that $\widetilde J = U^{-1}  J  T$,
\begin{equation}\label{diagHtilde}
  \begin{tikzcd}
\widetilde\Hcal \arrow{r}{\widetilde J} \arrow[d,"\times\sqrt{w}"]
\& L^2_{\widetilde\Q}  \\
\Hcal \arrow{r}{J}
\& L^2_{\Q}\arrow[u, "\times\frac{1}{\sqrt{w}}"']
\end{tikzcd}
\end{equation}
where $\widetilde J:\widetilde \Hcal\to L^2_{\widetilde\Q}$ denotes the linear operator that maps $  h\in \widetilde \Hcal$ to its $\Q$-equivalence class. As a consequence, all results of Section~\ref{secapprox} can be lifted and literally apply to $\widetilde \Q$, $\widetilde k$, $\widetilde \Hcal$, $\widetilde J$, $\widetilde f$ in lieu of $\Q$, $k$, $\Hcal$, $J$, $f$. In particular, we obtain the approximation $\widetilde f_\lambda$ of $\widetilde f$ in $L^2_{\widetilde\Q}$, and we have $f_\lambda=\sqrt{w} \widetilde f_\lambda$. Note also that $\widetilde k$ is $L^2_{\widetilde \Q}$-universal if and only if $k$ is $L^2_\Q$-universal.

We now let $n\in\N$ and $\bm X=(X^{(1)},\dots,X^{(n)})$ be a sample of i.i.d.\ $E$-valued random variables with $X^{(i)}\sim\widetilde\Q$. Without loss of generality we assume that the random variables $X^{(i)}$ are defined on the product measurable space $\bm E=E\times E\times\cdots$ and $\bm \Ecal= \Ecal\otimes\Ecal\otimes\cdots$, endowed with the product probability measure $\bm Q=\widetilde\Q \otimes\widetilde\Q \otimes\cdots$.

We define the empirical measure $\widetilde\Q_{\bm X}=\frac{1}{n}\sum_{i=1}^n \delta_{X^{(i)}}$ on $E$. Then, again, all results of Section~\ref{secapprox} apply sample-wise for $\widetilde\Q_{\bm X}$ in lieu of $\widetilde\Q$. We denote by $\widetilde J_{\bm X}:\widetilde\Hcal\to L^2_{\widetilde\Q_{\bm X}}$ and $\widetilde f_{\bm X}= (\widetilde J_{\bm X}^\ast \widetilde J_{\bm X} +\lambda)^{-1} \widetilde J_{\bm X}^\ast \widetilde f$ the sample analogues of $\widetilde J:\widetilde\Hcal\to L^2_{\widetilde\Q}$ and $\widetilde f_\lambda$, respectively.\footnote{As above, for any function $h:E\to\R$, we will write $h$ for its $\widetilde\Q_{\bm X}$-equivalence class. } Consistently with \eqref{diagHtilde}, we eventually \emph{define} the sample estimator of $f_\lambda$ by
\begin{equation}\label{eqFXdef}
 f_{\bm X} =   \sqrt{w}\widetilde f_{\bm X}.
\end{equation}

Our first main result is a pair of limit theorems, which shows consistency of the estimator \eqref{eqFXdef} seen as a function in $\Hcal$. For the notion of a Gaussian measure $\Ncal(m,Q)$ with mean $m$ and covariance operator $Q$ on a Hilbert space, we refer to Section~\ref{appLTH}. We denote the variance of $g\in L^2_{\widetilde\Q}$ by $\mathbb{V}_{\widetilde\Q}[g]=\|g\|_{2,{\widetilde\Q}}^2 - \langle g,1\rangle_{\widetilde\Q}^2 $.

 \begin{theorem}\label{thmLT}
\begin{enumerate}
\item Law of large numbers in $\Hcal$: $f_{\bm X}\xrightarrow{a.s.} f_\lambda$ as $n\to \infty$.

\item Central limit theorem in $\Hcal$: $\sqrt{n}( f_{\bm X}-f_\lambda ) \xrightarrow{d}  \Ncal(0, Q)$ as $n\to \infty$, where $Q:\Hcal\to \Hcal$ is the nonnegative, self-adjoint trace-class operator given by $\langle Q h ,h\rangle_\Hcal=\mathbb{V}_{\widetilde\Q}[(1/w)(f  - f_{\lambda} ) (J^\ast J+\lambda)^{-1} h] $, for $h\in \Hcal$.
\end{enumerate}
\end{theorem}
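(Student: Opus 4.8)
The plan is to reduce both statements to a single decomposition of $h_{\bm X} - h_\lambda$ as a normalized sum of i.i.d.\ $\Hcal$-valued random variables, plus an asymptotically negligible remainder, and then apply the standard strong law of large numbers and central limit theorem for random variables in a separable Hilbert space (which I would invoke from Section~\ref{appLTH}). First I would observe that $J_{\bm X}^\ast J_{\bm X}$ and $J_{\bm X}^\ast f$ are empirical averages: writing $\xi^{(i)} = k(\cdot,X^{(i)}) \in \Hcal$, one has $J_{\bm X}^\ast J_{\bm X} h = \frac{1}{n}\sum_{i=1}^n \langle \xi^{(i)}, h\rangle_\Hcal \, \xi^{(i)}$ and $J_{\bm X}^\ast f = \frac{1}{n}\sum_{i=1}^n f(X^{(i)})\,\xi^{(i)}$, with $\E[J_{\bm X}^\ast J_{\bm X}] = J^\ast J$ and $\E[J_{\bm X}^\ast f] = J^\ast f$ (the integrability needed here comes from \eqref{asskappaff4} via $\|\xi^{(i)}\|_\Hcal = \kappa(X^{(i)})$ and Cauchy--Schwarz, which also gives the claimed $(f-f_\lambda)Jh \in L^2_\mu$). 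From the normal equation $(J^\ast J + \lambda) h_\lambda = J^\ast f$ and its sample analogue one gets, on the event where $J_{\bm X}^\ast J_{\bm X} + \lambda$ is invertible,
\[
  h_{\bm X} - h_\lambda = (J_{\bm X}^\ast J_{\bm X} + \lambda)^{-1}\bigl( (J_{\bm X}^\ast f - J^\ast f) - (J_{\bm X}^\ast J_{\bm X} - J^\ast J) h_\lambda \bigr) = (J_{\bm X}^\ast J_{\bm X} + \lambda)^{-1} \frac{1}{n}\sum_{i=1}^n Z^{(i)},
\]
where $Z^{(i)} = \bigl( f(X^{(i)}) - f_\lambda(X^{(i)}) \bigr)\,\xi^{(i)}$ are i.i.d.\ centered $\Hcal$-valued random variables (centered because $\E[(f - f_\lambda)(X)\,k(\cdot,X)] = J^\ast(f - f_\lambda) = (J^\ast J + \lambda) h_\lambda - J^\ast f + \lambda \cdot 0$; more directly, $J^\ast f - (J^\ast J + \lambda)h_\lambda = 0$ rearranges to $\E[Z^{(i)}] = J^\ast f - J^\ast J h_\lambda = \lambda h_\lambda$ — I should be careful here and carry the $\lambda h_\lambda$ term, which is where the operator $(J^\ast J + \lambda)^{-1}$ acting on it reproduces $h_\lambda$ correctly; the clean statement is that $\frac1n\sum Z^{(i)}$ has mean $J^\ast f - J^\ast J h_\lambda$ and $(J^\ast J+\lambda)^{-1}(J^\ast f - J^\ast J h_\lambda) = h_\lambda - (J^\ast J+\lambda)^{-1}\lambda h_\lambda$, so one works instead with the genuinely centered increments $\widetilde Z^{(i)} = Z^{(i)} - \E[Z^{(i)}]$ together with the identity $(J_{\bm X}^\ast J_{\bm X}+\lambda)^{-1}(J_{\bm X}^\ast J_{\bm X} - J^\ast J)h_\lambda$ absorbed appropriately).

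For part (i), the law of large numbers, I would argue that $\frac1n\sum_{i=1}^n \widetilde Z^{(i)} \to 0$ almost surely in $\Hcal$ by the SLLN in separable Hilbert spaces (applicable since $\E\|\widetilde Z^{(i)}\|_\Hcal < \infty$, which follows from $\E[|f - f_\lambda|(X)\kappa(X)] \le \|(f-f_\lambda)\kappa\|_{2,\mu} < \infty$ by \eqref{asskappaff4} and Remark~\ref{remfhLp}-type bounds), and simultaneously $J_{\bm X}^\ast J_{\bm X} \to J^\ast J$ almost surely in operator norm — this last fact follows from the SLLN applied to the i.i.d.\ Hilbert--Schmidt-operator-valued variables $\xi^{(i)} \otimes \xi^{(i)}$, whose $\Hcal$-$\Hcal$ Hilbert--Schmidt norm is $\kappa(X^{(i)})^2$ with finite expectation by \eqref{ass0} (indeed by $\|\kappa\|_{4,\mu} < \infty$). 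Combining these with continuity of operator inversion on the (a.s.\ eventual) invertibility region, $h_{\bm X} - h_\lambda = (J_{\bm X}^\ast J_{\bm X} + \lambda)^{-1}(\cdots) \to (J^\ast J + \lambda)^{-1}\cdot 0 = 0$; when $\lambda = 0$ one restricts to the event $\Scal$ and uses Lemma~\ref{lemBdelta} to control the inverse, noting $\bm\mu[\Scal] \to 1$, or more simply notes that under \eqref{JaJinv} the finite-dimensionality makes everything uniform.

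For part (ii), the CLT, I would first establish $\sqrt{n}\,\frac1n\sum_{i=1}^n \widetilde Z^{(i)} \xrightarrow{d} \Ncal(0, \widetilde Q)$ in $\Hcal$ by the CLT for i.i.d.\ random variables in a separable Hilbert space, which requires $\E\|\widetilde Z^{(i)}\|_\Hcal^2 < \infty$; this square-integrability is exactly what \eqref{asskappaff4} buys, since $\|\widetilde Z^{(i)}\|_\Hcal \le |f - f_\lambda|(X^{(i)})\,\kappa(X^{(i)})$ up to the centering constant and $\|(f-f_\lambda)\kappa\|_{2,\mu} < \infty$. The limiting covariance $\widetilde Q$ is the covariance operator of $\widetilde Z^{(1)}$, characterized by $\langle \widetilde Q u, u\rangle_\Hcal = \mathbb{V}\bigl[\langle \widetilde Z^{(1)}, u\rangle_\Hcal\bigr] = \mathbb{V}_\mu\bigl[(f - f_\lambda)(X)\,(Ju)(X)\bigr]$ for $u \in \Hcal$, using $\langle \xi^{(1)}, u\rangle_\Hcal = u(X^{(1)}) = (Ju)(X^{(1)})$. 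Then I apply the continuous mapping / Slutsky argument: since $(J_{\bm X}^\ast J_{\bm X} + \lambda)^{-1} \to (J^\ast J + \lambda)^{-1}$ a.s.\ in operator norm (from part (i)'s ingredients), and $A_n x_n \xrightarrow{d} A x$ when $A_n \to A$ in operator norm in probability and $x_n \xrightarrow{d} x$ in a separable Hilbert space, we obtain $\sqrt{n}(h_{\bm X} - h_\lambda) \xrightarrow{d} (J^\ast J + \lambda)^{-1} \Ncal(0, \widetilde Q) = \Ncal\bigl(0,\, (J^\ast J + \lambda)^{-1}\widetilde Q (J^\ast J + \lambda)^{-1}\bigr)$, and one checks this equals $\Ncal(0, Q)$ with $\langle Q h, h\rangle_\Hcal = \langle \widetilde Q (J^\ast J+\lambda)^{-1}h, (J^\ast J + \lambda)^{-1} h\rangle_\Hcal = \mathbb{V}_\mu\bigl[(f - f_\lambda)\,J(J^\ast J + \lambda)^{-1}h\bigr]$, matching the claimed formula; trace-class and self-adjointness of $Q$ are inherited from $\widetilde Q$ (a genuine covariance operator) under conjugation by the bounded self-adjoint operator $(J^\ast J + \lambda)^{-1}$.

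The main obstacle I anticipate is the rigorous handling of the remainder term when $\lambda = 0$: there the operator $(J_{\bm X}^\ast J_{\bm X})^{-1}$ need not exist for small $n$, so the clean Slutsky argument must be carried out on the sampling event $\Scal$ (or on $\{h_{\bm X} \neq 0\}$), and one must verify that the $\Scal$-truncation does not affect the limits — i.e.\ that $\bm\mu[\Scal] \to 1$ under \eqref{JaJinv}, which in turn needs $\|\kappa\|_{\infty,\mu}$ or at least the a.s.\ operator-norm convergence $J_{\bm X}^\ast J_{\bm X} \to J^\ast J$ that follows from $\|\kappa\|_{4,\mu} < \infty$ and finite dimensionality. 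The secondary subtlety is bookkeeping the noncentered part: decomposing $h_{\bm X} - h_\lambda$ so that the genuinely centered i.i.d.\ sum appears with the correct $\sqrt n$ scaling and the leftover $(J_{\bm X}^\ast J_{\bm X}+\lambda)^{-1}(J_{\bm X}^\ast J_{\bm X} - J^\ast J)h_\lambda$ contributes, after multiplication by $\sqrt n$, a term that is itself $O_P(1)$ Gaussian and must be merged into the single Gaussian limit — this is automatic once one writes $h_{\bm X} - h_\lambda = (J_{\bm X}^\ast J_{\bm X}+\lambda)^{-1}\frac1n\sum_i\bigl[(f(X^{(i)}) - f_\lambda(X^{(i)}))\xi^{(i)} - J^\ast(f - f_\lambda)\bigr]$ after noting $J^\ast(f - f_\lambda) = (J^\ast J + \lambda)h_\lambda - J^\ast J h_\lambda - \lambda h_\lambda + J^\ast f - J^\ast f$... concretely, using $(J^\ast J + \lambda)h_\lambda = J^\ast f$ one gets $J^\ast f - J^\ast J h_\lambda - \lambda h_\lambda = 0$, hence the bracketed quantities are exactly centered, which is the cleanest route and makes the remainder vanish identically. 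I would verify that last identity carefully as the linchpin of the whole argument.
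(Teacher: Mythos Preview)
Your proposal is correct and follows essentially the same route as the paper: the identical decomposition $h_{\bm X}-h_\lambda=(J_{\bm X}^\ast J_{\bm X}+\lambda)^{-1}\frac1n\sum_i\xi_i$ with the centered increments $\xi_i=(f(X^{(i)})-f_\lambda(X^{(i)}))k_{X^{(i)}}-J^\ast(f-f_\lambda)$ (your final paragraph gets this right; the meandering in the middle about the ``noncentered part'' is unnecessary once you use $(J^\ast J+\lambda)h_\lambda=J^\ast f$, as you eventually observe), then the Hilbert-space SLLN/CLT for $\frac1n\sum\xi_i$, and finally Slutsky/continuous mapping after establishing $(J_{\bm X}^\ast J_{\bm X}+\lambda)^{-1}\to(J^\ast J+\lambda)^{-1}$ a.s.

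The one substantive difference is how that last operator convergence is obtained. The paper proves it (Lemma~\ref{lemM1}) via Borel--Cantelli using the Hoeffding concentration inequality~\eqref{ci_A}, which needs $\|\kappa\|_{\infty,\mu}<\infty$. You instead apply the Hilbert-space SLLN directly to the i.i.d.\ Hilbert--Schmidt-valued variables $k_{X^{(i)}}\otimes k_{X^{(i)}}$, which only needs $\E[\kappa(X)^4]=\|\kappa\|_{4,\mu}^4<\infty$, i.e.\ exactly \eqref{asskappaff4}, and then invoke continuity of inversion. Your argument is therefore better aligned with the stated hypotheses of the theorem; the paper's route tacitly uses a stronger boundedness assumption at this step.
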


An immediate consequence of Theorem~\ref{thmLT} is the following weak central limit theorem, which holds for any $h\in \Hcal$,
\begin{equation}\label{weakCLT}
  \sqrt{n}  \langle f_{\bm X}-f_\lambda, h \rangle_\Hcal \xrightarrow{d}  \Ncal(0, \langle Q h ,h\rangle_\Hcal) \quad\text{ as $n\to \infty$.}
\end{equation}

\begin{remark}\label{remCLT}
From Theorem \ref{thmLT} and the continuous mapping theorem we immediately obtain the corresponding law of large numbers and central limit theorem in $L^2_\Q$. The latter reads $\sqrt{n}( f_{\bm X}-f_\lambda ) \xrightarrow{d}  \Ncal(0, J Q J^\ast)$ as $n\to \infty$, where $J Q J^\ast:L^2_\Q\to L^2_\Q$ is the nonnegative, self-adjoint trace-class operator given by $\langle  J Q J^\ast g ,g\rangle_\Q=\mathbb{V}_{\widetilde\Q}[(1/w)(f-f_\lambda)(J^\ast J+\lambda)^{-1} J^\ast g]$, for $g\in L^2_\Q$. The weak central limit theorem~\eqref{weakCLT} reads $\sqrt{n}  \langle f_{\bm X}-f_\lambda, g \rangle_\Q \xrightarrow{d}  \Ncal(0, \langle J Q J^\ast g ,g\rangle_\Q)$ as $n\to \infty$.
\end{remark}

\begin{remark}\label{remthmLT}
Theorem \ref{thmLT} actually holds under weaker assumptions than \eqref{newasstildef}--\eqref{newasstildek}, namely $\|\widetilde f \widetilde\kappa \|_{2,\widetilde\Q}<\infty$ and $\|\widetilde\kappa\|_{4,\widetilde\Q}<\infty$. This is evident from its proof, see \eqref{xi_expec} and \eqref{decJaJJaJ3}.
\end{remark}

Our second main result gives finite sample guarantees for the estimator \eqref{eqFXdef}.

\begin{theorem}\label{thmcinew}
For any $\eta\in (0,1]$, we have
\begin{equation}\label{eqCInew}
  \|f_{\bm X} - f_\lambda\|_\Hcal  <  \frac{2\sqrt{2 \log(2/\eta)}\|(1/w)(f-f_\lambda)\kappa\|_{\infty,\Q}}{\lambda\sqrt{n}}
\end{equation}
with sampling probability $\bm Q$ of at least $1-\eta$.
\end{theorem}

\begin{remark}\label{remdimfree}
Note that the bound in Theorem \ref{thmcinew} is dimension-free in the sense that, while the constants may depend on the dimension of $E$, the convergence rate in $n$ does not. From the proof, we see that this is a direct consequence of the Hoeffding inequality \eqref{hoeffdingeq}, which is dimension-free. We also observe that the closer the approximation $f_\lambda$ to $f$, the smaller the finite sample error bounds.
\end{remark}

As for the choice of the sampling measure ${\widetilde\Q}$ that satisfies conditions~\eqref{newasstildef} and \eqref{newasstildek}, there is an optimal one that yields a minimal $L^\infty$-norm of the kernel in the following sense.
\begin{lemma}\label{lemoptwnew}
For any sampling measure ${\widetilde\Q}\sim\Q$, we have $\|\widetilde\kappa\|_{\infty,{\Q}}\ge \|\kappa\|_{2,\Q}$, with equality if and only if $\kappa> 0$ and
\begin{equation}\label{natwasnew}
  w = \frac{\kappa^2}{\|\kappa\|_{2,\Q}^2},\quad \text{$\Q$-a.s.}
\end{equation}
In this case, $\widetilde \kappa=\|\kappa\|_{2,\Q}$ is constant $\Q$-a.s.
\end{lemma}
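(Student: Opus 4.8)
The plan is to reduce the claim to the monotonicity of $L^p$-norms on a probability space. First I would record the identity $\|\widetilde\kappa\|_{2,\widetilde\mu}=\|\kappa\|_{2,\mu}$, which is immediate from $\widetilde\kappa=\kappa/\sqrt{w}$ and $\widetilde\mu(dx)=w(x)\,\mu(dx)$ (equivalently, it is the isometry $U$ from the text applied to $\widetilde\kappa=U^{-1}\kappa$). Hence it suffices to prove $\|\widetilde\kappa\|_{p,\widetilde\mu}\ge\|\widetilde\kappa\|_{2,\widetilde\mu}$ for every $p\in(2,\infty]$, together with the stated characterization of equality.

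For $2<p<\infty$, since $\widetilde\mu$ is a probability measure and $t\mapsto t^{p/2}$ is convex on $[0,\infty)$, Jensen's inequality gives $\big(\int_E\widetilde\kappa^2\,d\widetilde\mu\big)^{p/2}\le\int_E\widetilde\kappa^p\,d\widetilde\mu$, that is, $\|\widetilde\kappa\|_{2,\widetilde\mu}\le\|\widetilde\kappa\|_{p,\widetilde\mu}$. For $p=\infty$, the bound $\int_E\widetilde\kappa^2\,d\widetilde\mu\le\|\widetilde\kappa\|_{\infty,\widetilde\mu}^2$ is trivial. Combined with the first paragraph this yields $\|\widetilde\kappa\|_{p,\widetilde\mu}\ge\|\kappa\|_{2,\mu}$.

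For the equality case, strict convexity of $t\mapsto t^{p/2}$ (when $p<\infty$) forces $\widetilde\kappa^2$, hence $\widetilde\kappa$, to be constant $\widetilde\mu$-a.s.; when $p=\infty$, equality means $\int_E(\|\widetilde\kappa\|_{\infty,\widetilde\mu}^2-\widetilde\kappa^2)\,d\widetilde\mu=0$ with nonnegative integrand, so again $\widetilde\kappa$ is constant $\widetilde\mu$-a.s. Because $\widetilde\mu\sim\mu$, "$\widetilde\mu$-a.s." and "$\mu$-a.s." coincide, so $\widetilde\kappa=c$ $\mu$-a.s. for some constant $c\ge0$, and plugging into the identity of the first paragraph gives $c=\|\widetilde\kappa\|_{2,\widetilde\mu}=\|\kappa\|_{2,\mu}$. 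Since $w=d\widetilde\mu/d\mu>0$ $\mu$-a.s., the relation $\kappa=\sqrt{w}\,\widetilde\kappa=\|\kappa\|_{2,\mu}\sqrt{w}$ then yields $\kappa>0$ $\mu$-a.s. and $w=\kappa^2/\|\kappa\|_{2,\mu}^2$ (using $\|\kappa\|_{2,\mu}>0$, which holds whenever $\Hcal\neq\{0\}$). Conversely, if $\kappa>0$ and $w=\kappa^2/\|\kappa\|_{2,\mu}^2$, then $\int_E w\,d\mu=1$, so $w$ is a valid density, and $\widetilde\kappa=\kappa/\sqrt{w}=\|\kappa\|_{2,\mu}$ is constant, whence equality holds throughout; this also gives the final assertion that $\widetilde\kappa=\|\kappa\|_{2,\mu}$ $\mu$-a.s. in the equality case.

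There is no substantial obstacle here: the argument is a clean application of the power-mean (Jensen) inequality on the probability space $(E,\Ecal,\widetilde\mu)$. The only points requiring a little care are the equality analysis for the endpoint $p=\infty$ and the passage from "$\widetilde\kappa$ is $\widetilde\mu$-a.s. constant" to the explicit formula for $w$, which relies on $\widetilde\mu\sim\mu$ and on $w>0$ $\mu$-a.s.
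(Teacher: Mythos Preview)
Your proof is correct and follows essentially the same approach as the paper: both reduce the claim to the monotonicity of $L^p$-norms on the probability space $(E,\Ecal,\widetilde\mu)$ via the identity $\|\widetilde\kappa\|_{2,\widetilde\mu}=\|\kappa\|_{2,\mu}$, with equality characterized by $\widetilde\kappa$ being constant $\mu$-a.s. The paper's proof is very terse (three lines), whereas you supply the details the paper omits---the Jensen argument, the strict convexity for the equality case, the separate treatment of $p=\infty$, and the passage from ``$\widetilde\kappa$ constant'' to the explicit formula for $w$ and back.
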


With the choice \eqref{natwasnew} we obtain that $\|\widetilde\kappa\|_{\infty,\Q}=\|\kappa\|_{2,\Q}$, which asserts condition \eqref{newasstildek}. As for condition~\eqref{newasstildef}, in conjunction with the choice \eqref{natwasnew}, we can always choose the original kernel $k$ such that $\|f/\kappa\|_{\infty,\Q}<\infty$, which then implies \eqref{newasstildef}.

Besides the above considerations, for practical matters, it is convenient to choose the sampling measure ${\widetilde\Q}\sim\Q$ such that
\begin{equation}\label{sampnu}
 \text{sampling from ${\widetilde\Q}$ is feasible.}
\end{equation}

Finite sample guarantees similar to \eqref{eqCInew} have been derived in the machine learning literature, e.g., \cite{cuck_smale_2001, cuc_sma_2002, ste_sco_2005, wu_et_al_2006, wu_zho_2006, cap_dev_2007, bau_et_al_2007, sma_zho_2007, wu_et_al_2007,  ras_sam_2017, lin_et_al_2018}, but under more stringent assumptions than ours. For instance, \cite{ras_sam_2017} assume that $f_0\in\Ima J$, which does not hold in our examples in Section~\ref{secexamples} below. Indeed, the Gaussian-exponentiated kernel is $L^2_\Q$-universal, see Lemma~\ref{lemGEKuniversalnew}, and hence $f_0=f$ which is not in $\Ima J$ for any of the payoff functions $f$. For another instance, \cite{lin_et_al_2018} assume that $E$ is compact, which again does not hold in our examples. Such assumptions are standard in the above literature, where most papers aim to determine optimal learning rates for the total error $\|f_{\bm X} - f_0\|_{2,\Q}$. These are statements of the form $\bm Q[\|f_{\bm X} - f_0\|_{2,\Q}  < c(\eta,n)]\ge 1-\eta$ for all $n\ge n_0(\eta)$, for $\eta\in (0,1]$. The best learning rate obtained so far is $c(\eta,n) = O(n^{-1/2})$, which is consistent with \eqref{eqCInew}. However, we believe that separating the sample error bound \eqref{eqCInew}, for a fixed $\lambda>0$, from the approximation error leads to higher transparency of the arguments and the flexibility of our framework to adhere to financial applications. As a consequence, we state Theorem \ref{thmcinew} under the minimal assumptions in order to cover financial examples. Indeed, finite sample guarantees stated in the machine learning literature hold under more stringent assumptions and would require from the user in finance to carefully inspect these proofs, to check whether they also apply under our weaker assumptions. This is arguably a cumbersome task. For the sake of the reader, we therefore give a self-contained short proof of Theorem \ref{thmcinew}. The arguments are in the spirit of the proof of \cite[Theorem~3.1]{ras_sam_2017}, which again builds on \cite[Theorem~2]{dev_ros_cap_05} and \cite[Theorem~4]{cap_dev_2007}, and which, however, is stated under the aforementioned stringent assumptions. Interestingly, to the best of our knowledge, Theorem \ref{thmLT} is not available in the literature in this form.


\section{Computation}\label{seccompnew}

We show how to compute $ f_{\bm X}$. We also derive the sample analogue of Lemma~\ref{lemclosedform}, which gives the estimated value process $ V_{\bm X}$ in \eqref{eqYhat} in closed form. We explicitly take into account that sample points may overlap.

We start by noting that $\bar n=\dim L^2_{\widetilde\Q_{\bm X}}\le n$, with equality if and only if
\begin{equation}\label{XiXjneq}
 X^{(i)}\neq X^{(j)}\quad \text{for all $i\neq j$.}
\end{equation}
Therefore, we let $\bar X^{(1)},\dots,\bar X^{(\bar n)}$ be the distinct points in $E$ such that $\{\bar X^{(1)},\dots,\bar X^{(\bar n)}\}=\{X^{(1)},\dots,X^{(n)}\}$.\footnote{This sorting step adds computational cost. In Section~\ref{sseccompwos} we show how to compute $ f_{\bm X}$ without sorting.} Define the index sets $I_j =\{ i\mid X^{(i)}=\bar X^{(j)}\}$, $j=1,\dots,\bar n$. We consider the orthogonal basis $\{\psi_1,\dots,\psi_{\bar n}\}$ of $L^2_{\widetilde\Q_{\bm X}}$ given by $\psi_i(\bar X^{(j)})=  |I_i|^{-1/2} \delta_{ij}$, so that $\langle\psi_i,\psi_j\rangle_{\widetilde\Q_{\bm X}}=\frac{1}{n}\delta_{ij}$, for $1\le i,j\le \bar n$. The coordinate vector representation of any $g\in L^2_{\widetilde\Q_{\bm X}}$ accordingly is given by
\begin{equation}\label{eqcoordtilde}
  {\bm g} = (|I_1|^{1/2} g(\bar X^{(1)}),\dots,|I_{\bar n}|^{1/2} g(\bar X^{(\bar n)}))^\top.
\end{equation}
We define the positive semidefinite $\bar n\times \bar n$-matrix ${\bm K}$ by ${\bm K}_{ij}= |I_i|^{1/2}\widetilde k(\bar X^{(i)},\bar X^{(j)}) |I_j|^{1/2}$, for $1\le i,j\le \bar n$. From \eqref{eqJxJxa} we see that $\frac{1}{n}{\bm K}$ is the matrix representation of $\widetilde J_{\bm X}\widetilde J_{\bm X}^\ast: L^2_{\widetilde\Q_{\bm X}}\to L^2_{\widetilde\Q_{\bm X}}$. We thus arrive at the following lemma, which shows how to compute $ f_{\bm X}$ and $ V_{\bm X}$ in terms of ${\bm K}$ and ${\bm f}=(|I_1|^{1/2} \widetilde f(\bar X^{(1)}),\dots,|I_{\bar n}|^{1/2} \widetilde f(\bar X^{(\bar n)}))^\top $, the coordinates of $\widetilde f$ in $L^2_{\widetilde\Q_{\bm X}}$ according to \eqref{eqcoordtilde}.

\begin{lemma}\label{lemcompnewX}
The unique solution ${\bm g}\in\R^{\bar n}$ to
\begin{equation}\label{KLS'}
  \textstyle(\frac{1}{n}{\bm K}+\lambda) {\bm g} = {\bm f},
\end{equation}
gives $f_{\bm X}=\frac{1}{n}\sum_{j=1}^{\bar n} k(\cdot,\bar X^{(j)})\frac{|I_j|^{1/2} {\bm g}_j}{\sqrt{w(\bar X^{(j)})}}$. If, moreover, the kernel $k$ is tractable then
\begin{equation}\label{hatVcf1}
  V_{\bm X,t}=
  \frac{1}{n}\sum_{j=1}^{\bar n} M_t(\bar X^{(j)}) \frac{|I_j|^{1/2} {\bm g}_j}{\sqrt{w(\bar X^{(j)})}} ,\quad t=0,\dots,T,
\end{equation}
is given in closed form.
\end{lemma}

\begin{remark}
Computing the $\bar n\times \bar n$-matrix ${\bm K}$ is infeasible when $\bar n$ is significantly greater than $10^5$ both in
terms of memory and computation, see \cite{mai_ver_18}. In this case, one could consider a low-rank approximation of the kernel of the form $\widetilde k(x,y)\approx \widetilde\phi(x)^\top\widetilde\phi(y)$ for some feature map $\widetilde\phi:E\to\R^m$. This brings us to the finite-dimensional case discussed in Lemma~\ref{lemcompnewFDnew} below. There has recently been a lot of research on such low-rank approximations of kernels. E.g., \cite{dai_etal_14,lu_etal_16} use a probabilistic representation of the kernel as in Lemma~\ref{thmmercernew}\ref{thmmercernew2}, where they approximate $\M$, and thus $k$, by the empirical measure induced by a finite sample $\omega_1,\dots,\omega_m\in\Omega$ drawn from $\M$.
\end{remark}

\section{Tractable kernels}\label{secTK}

As we have seen, the above kernel method can be efficiently applied for approximating $V$ if the chosen kernel is tractable for a given random driver. Luckily there are many such kernels $k$ and distributions $\Q$, as we shall see now. Thereto, we henceforth assume that $k$ is of the multiplicative form
\begin{equation}\label{kernel}
k(x,y)=\prod_{t=0}^T k_t(x_t,y_t)
\end{equation}
for measurable kernels $k_t$ on $E_t$ such that $\kappa_t\in L^2_{\Q_t}$ for $\kappa_t(x)=\sqrt{k_t(x,x)}$, and with separable RKHS $\Hcal_t$. The RKHS of $k$ can then be identified with the tensor product $\Hcal=\Hcal_0\otimes\cdots\otimes \Hcal_T$, see \cite[Theorem 5.11]{pau_rag_16}. In particular, $\langle g, h\rangle_\Hcal =\prod_{t=0}^T \langle g_t,h_t\rangle_{\Hcal_t}$ for functions $g(x)=\prod_{t=0}^T g_t(x_t)$ and $h(x)=\prod_{t=0}^T h_t(x_t)$.

It is then easy to see that the kernel $k$ in \eqref{kernel} is tractable if the kernel embeddings $m_t(y) =\int_{E_t} k_t(x,y) \Q_t(dx)$, see \cite{sri_etal_10}, are in closed form for all $y\in E_t$ and $t=0,\dots,T$. Indeed, the conditional kernel embeddings can now be written as
\begin{equation}\label{condKernEmb}
    M_t(y)=\E_\Q[k(X,y)\mid \Fcal_t] = \prod_{s=0}^t k_s(X_s,y_s)\prod_{s=t+1}^T m_{s}(y_s),\quad y\in E.
\end{equation}

We next assume that each $E_t$ is a measurable subset of $\R^{d_t}$ for some $d_t\in\N$. Then Bochner's theorem \cite[Proposition 2.5]{sat_99} implies that any symmetric probability measure $\Lambda$ on $\R^{d_t}$, and parameter $\beta\ge 0$, give rise to a kernel on $E_t$ of the form
\begin{equation}\label{TIK}
  k_t(x,y)=\e^{\beta x^\top y}\int_{\R^{d_t}} \e^{\im (x-y)^\top \lambda} \Lambda(d\lambda),\quad x,y\in E_t.\footnote{$\Lambda$ is symmetric if $\Lambda(-B) = \Lambda(B)$, where $-B= \{-x\mid x \in B\}$, for every Borel measurable set $B\subset \R^{d_t}$.}
\end{equation}
As for the random driver distribution, we assume that every $\Q_t$ is infinitely divisible and admits exponential moments of order $\beta x$, for all $x\in E_t$. Then the L\'evy--Khintchine formula yields a closed form expression for the (extended) characteristic function $\widehat \Q_t(u)=\int_{E_t} \e^{ u^\top y}\Q_t(dy)$ for all admissible $u\in\C^{d_t}$, see \cite[Theorem 8.1]{sat_99}. Examples include (discrete-time) L\'evy processes $X$, which are widespread stochastic drivers in financial models. The kernel embedding becomes
\begin{equation}\label{PTIK}
   m_{t}(x) = \int_{\R^{d_t}}\int_{E_t} \e^{(\beta x+\im \lambda)^\top y} \Q_t(dy) \e^{-\im x^\top \lambda}\Lambda(d\lambda) = \int_{\R^{d_t}}\widehat \Q_t(\beta x+\im \lambda) \e^{-\im x^\top \lambda}\Lambda(d\lambda),\quad x\in E_t,
\end{equation}
which is in closed form subject to an integration with respect to $\Lambda(d\lambda)$. In order to appreciate this finding, we note that Fourier type integrals like the one on the right hand side in \eqref{PTIK} are routinely computed, e.g, in L\'evy type or affine models, \cite{duf_fil_sch_03}. So we can draw on a large library of available computer code.

Tractable measures $\Lambda$ include symmetric infinitely divisible distributions, for which the L\'evy--Khintchine formula yields a closed form expression for $k_t$ in \eqref{TIK},
\[ k_t(x,y)=\e^{\beta x^\top y} \e^{-\frac{1}{2} (x-y)^\top A (x-y) + \int_{\R^{d_t}} (\cos((x-y)^\top\xi)-1)\nu(d\xi)},\quad x,y\in E_t,\]
where $A$ is a positive semi-definite matrix, and $\nu$ is a symmetric L\'evy measure on $\R^{d_t}$, see \cite[Theorem 8.1 and E 18.1]{sat_99}. Such kernels for $\beta=0$ have recently also been studied by \cite{nis_fuk_16}. For $\nu=0$ and $A = 2\alpha I_{d_t}$, where $I_{d_t}$ is the identity matrix, we obtain the \emph{Gaussian-exponentiated kernel}
\begin{equation}\label{eqGEK}
  k_t(x,y)= \e^{-\alpha\|x-y\|^2+\beta x^\top y},\quad x,y\in E_t,
\end{equation}
with parameters $\alpha\ge 0 $ and $\beta\ge 0$. This contains the Gaussian kernel, for $\beta=0$, and the exponentiated kernel, for $\alpha= 0$, as special cases.

Also the kernels of Sobolev spaces are of the form \eqref{TIK} with $\beta=0$. \cite{nov_etal_18} recently showed that the reproducing kernel of the Sobolev space $W^s_2(\R^{d_t})$ of functions whose weak derivatives up to order $s>d_t/2$ are square-integrable is given by the probability measure $\Lambda(d\lambda)=  (2\pi)^{-d_t} (1+\sum_{0<|\bm\alpha|\le s}\lambda^{\bm \alpha})^{-1} d\lambda$. This is noteworthy, as Sobolev spaces are versatile tools for function approximation, and thus potentially useful for tractable finance applications.

\section{Examples}\label{secexamples}

We extend on the introductory example with the Black--Scholes model with $d$ nominal stock price processes $S_{i,t}$ given by~\eqref{SeqBS}, for some dimension $d\in\N$. In particular, we assume that $X_t$ are i.i.d.\ standard Gaussians on $E_t=\R^d$, $t=1,\dots,T$.\footnote{Note that we do not specify $X_0$ here, which could include portfolio specific values that parametrize the cumulative cashflow function $f(X)$. This could include the strike price of an embedded option or the initial values of underlying financial instruments. We could sample $X_0$ from a Bayesian prior $\Q_0$. We henceforth omit $X_0$, which is tantamount to setting $k_0=1$.\label{footnote_X0}}

As for components of the kernel \eqref{kernel}, we consider the Gaussian-exponentiated kernels \eqref{eqGEK} with parameters $\alpha> 0$ and $\beta\in [0,1/2)$. The upper bound on $\beta$ is necessary and sufficient for \eqref{ass0} to hold. Whenever appropriate, we identify the path space $E$ with $\R^{dT}$ by stacking $x=(x_1,\dots,x_T)$ into a column vector. Accordingly, $\Q=\Ncal(0,I_{dT})$ is the standard Gaussian measure on $\R^{dT}$, and we can write $k(x, y) = \e^{-\alpha \|x-y\|^2+\beta x^\top y}$.

In view of Lemma~\ref{lemsepHnew}, every $h\in \Hcal$ is continuous and $\Hcal$ is separable. For the following important property we recall Definition~\ref{defL2UK}.
\begin{lemma}\label{lemGEKuniversalnew}
The Gaussian-exponentiated kernel $k$ is $L^2_\Q$-universal.
\end{lemma}

As for the sampling measure $\widetilde\Q$, we consider the Radon--Nikodym derivative $w=d{\widetilde\Q}/d\Q$ given by
\[  w(x)=  (1-2\gamma)^{dT/2}\e^{\gamma \|x\|^2} \]
with parameter $\gamma<1/2$. Then $\widetilde\Q=\Ncal(0, (1-2\gamma)^{-1} I_{d  T})$ is a centered Gaussian measure with scaled variance, so that \eqref{sampnu} is clearly satisfied. We obtain $\widetilde\kappa(x)= (1-2\gamma)^{-dT/4} \e^{(\beta/2-\gamma/2) \|x\|^2}$. Hence condition \eqref{newasstildek} holds if and only if
\begin{equation}
 \beta\le \gamma,\label{folp2}
\end{equation}
which we henceforth assume. Note that for $\beta=\gamma$ we obtain the Radon--Nikodym derivative \eqref{natwasnew}, which is optimal in the sense of Lemma~\ref{lemoptwnew}.

For the Gaussian sampling measure $\widetilde\Q$, \eqref{XiXjneq} almost surely holds for any finite sample, so that $\bar n=n$, $\bar X^{(j)}=X^{(j)}$ and $|I_j|=1$ for all $j=1,\dots,n$. This simplifies the expression of the estimator $V_{\bm X, t}$ in \eqref{hatVcf1}, which also involves the conditional kernel embeddings $M_t$, given in \eqref{condKernEmb}. Straightforward calculations show that the involved kernel embeddings are of the closed form
\begin{equation}\label{eqmsBS}
  m_s(y_s) = (1+2\alpha)^{-d/2}\e^{ \frac{\beta^2 + 4 \alpha \beta - 2 \alpha}{4 \alpha + 2} \|y_s\|^2}.
\end{equation}

As for the portfolios, we fix a strike price $K$ and consider the following European options with discounted payoff functions
\begin{itemize}
\item Min-put $f(X)=  \e^{-r\sum_{t=1}^T \Delta_t} (K-\min_{i} S_{i,T})^+$;
\item Max-call $f(X)=\e^{-r\sum_{t=1}^T \Delta_t} (\max_{i} S_{i,T} -  K)^+$.
\end{itemize}
We also consider a genuinely path-dependent product with the discounted payoff function
\begin{itemize}
\item Barrier reverse convertible $f(X)= \e^{-r\sum_{t=1}^T \Delta_t} \left(C  +  F  \left( 1 - 1_{\{\min_{i,t}   S_{i,t}\le B\}}   \left(1 - \min_{i} \frac{ S_{i,T}}{S_{i,0}  K} \right)^+\right)\right)$,
\end{itemize}
for some barrier $B<K$, coupon $C$, and face value $F$. At maturity $T$, the holder of this structured product receives the coupon $C$. She also receives the face value $F$ if none of the nominal stock prices falls below the barrier $B$ at any time step $t=1,\dots,T$. Otherwise, the face value $F$ is reduced by the payoff of $F/K$ min-puts on the normalized stocks $S_{i,T}/S_{i,0}$ with strike price $K$. These examples are inspired from those given in \cite{bec_etal_2019}. Note that the payoff functions of the min-put and barrier reverse convertible are bounded, while the payoff of the max-call is unbounded.

For our numerical experiments, we choose the following parameter values: risk-free rate $r=0$, initial stock prices $S_{i,0}=1$, volatilities $\sigma_i=0.2 \bm e_i$, where $\bm e_i$ denote the standard basis vectors in $\R^d$, so that stock prices are independent, strike price $K=1$ (at the money), barrier $B=0.6$, coupon $C=0$, and face value $F=1$. The remaining parameters are chosen case-by-case as follows:
\begin{itemize}
\item Min-put: $d=6$ stocks, $T=2$ time steps with step sizes $\Delta_1 = 1/12$ and $\Delta_2=11/12$, and sampling measure parameter $\gamma=0$. The last is justified as the min-put payoff is bounded. Note that necessarily $\beta=0$ by \eqref{folp2}. Hence condition \eqref{newasstildek}, and thus both Theorems \ref{thmLT} and \ref{thmcinew} hold.
\item Max-call: $d=6$, $T=2$, $\Delta_1=1/12$, $\Delta_2=11/12$, as for the min-put. However, condition \eqref{newasstildef} holds---and Theorem~\ref{thmcinew} applies---if and only if $\gamma>0$. On the other hand, in view of Remark~\ref{remthmLT}, Theorem~\ref{thmLT} still applies also for $\gamma=0$. So we try $\gamma=0$ and $\gamma =0.15$.\footnote{Numerical issues arise for $\gamma>0.15$. Indeed, the sample estimator of $\E_{\widetilde\Q}[1/w(X)]=1$ gives values that are significantly smaller than $1$, due to limited precision when representing sample values of $1/w(X)$ that are close to zero in dimension 36.}
\item Barrier reverse convertible: $d=3$ stocks, $T=12$ time steps with step sizes $\Delta_t=1/12$, and sampling measure parameter $\gamma=0$. The last is justified as for the min-put, and implies that both Theorems \ref{thmLT} and \ref{thmcinew} hold.
\end{itemize}

In \cite[Section~5]{fer_fil_2020}, an insurance liability model is constructed using models often used in practice, where $d=5$. For the min-put and max-call, we have $d=6$ and the dimension of the path space $E=\R^{dT}$ amounts to $12$; for the barrier reverse convertible these values are $3$ and $36$, respectively. In practical terms, these examples can thus be considered high-dimensional.

Under the parameter specifications above, we generate a training sample $\bm X$ of size $n= 2\times 10^4$ and use the Gaussian Process Regression (GPR) module of the scikit-learn library \cite{scikit_learn}. Indeed, GPR yields the same expression as we have for the sample estimator $f_{\bm X}$ in Lemma~\ref{lemcompnewX}, see \cite{ras_wil_2005}. The advantage of using GPR is that some optimal hyperparameter values $\alpha$, $\beta$ and $\lambda$ are obtained by maximizing a likelihood function \cite{ras_wil_2005}. This is an alternative to the standard validation step where one needs to specify a grid for every hyperparemeter, which can lead to cumbersome and lengthy computations, as we experienced for our examples. Instead, for GPR we only need to specify value ranges for each hyperparemeter, which here we chose as $\alpha \in [2.8\times 10^{-5}, 83]$, $\beta \in [10^{-9}, 0.15]$ and $\lambda \in [10^{-12}, 10^{-3}]$. Table \ref{tableHP} shows the optimal hyperparameter values. We notice that all optimal values lie inside their pre-specified ranges.

\begin{table}[h]
\centering
  \begin{tabular}{|l|c|c|c|}
  \hline
Payoff & $\alpha$  &  $\beta$ & $\lambda$   \\
\hline\hline
Min-put  & $2.06\times 10^{-2}$ & 0 & $1.86\times10^{-8}$\\
\hline
Max-call ($\gamma=0$)  & $2.53\times 10^{-2}$ & 0 & $3.33\times 10^{-8}$\\
 Max-call ($\gamma=0.15$)  & $3.66\times 10^{-2}$ & $3.25\times 10^{-9}$ & $4.14\times 10^{-8}$\\
\hline
Barrier reverse convertible & $2.96\times10^{-3}$ & 0 & $9.20\times 10^{-8}$\\
\hline
\end{tabular}
\caption{Optimal hyperparameter values $\alpha$, $\beta$, $\lambda$ from GPR.}\label{tableHP}
\end{table}

We then compute the estimated value process $V_{\bm X,t}$ at time steps $t \in \{0, 1, T\}$ using Lemma~\ref{lemcompnewX}, \eqref{condKernEmb} and \eqref{eqmsBS}. We benchmark $V_{\bm X}$ to the ground truth value process $V$, which we obtain by means of large Monte Carlo schemes using $n_{test} = 10^5$ simulations. More specifically, we obtain $V_0$ as simple Monte Carlo estimate from simulating $V_T=f(X)$. For $V_1$, we use a nested Monte Carlo scheme, where we estimate each sample of $V_1=V_1(X_1)$ using $n_{inner} = 1000$ independent inner simulations of $(X_2,\dots,X_T)$. Then we carry out the following computations.

First, we compute the absolute relative error of $V_{\bm X, 0}$, $|V_{\bm X, 0} - V_0|/V_0$, and the normalized $L^2_{\Q}$-errors of $V_{\bm X, t}$, $\|V_{\bm X, t} - V_t\|_{2, \Q}/V_0$, for $t=1, T$. Table \ref{table1new} shows that the normalized $L^2_\Q$-error of $V_{\bm X, t}$ decreases substantially with the time-to-maturity $T-t$. More specifically, the normalized $L^2_\Q$-error of $V_{\bm X, 1}$ is, on average, 10-times smaller than that of $V_{\bm X, T}$. And the relative absolute error of $V_{\bm X, 0}$ is, on average, 19-times smaller than the normalized $L^2_\Q$-error of $V_{\bm X, 1}$. These findings are in line with \eqref{doobineq} and have useful practical implications. Indeed, the sample error bound in Theorem~\ref{thmcinew} is, arguably, mainly of theoretical interest and hardly available in practice. However, in concrete applications, one can always estimate the normalized $L^2_{\Q}$-error of $V_{\bm X, T}$ by a simple Monte Carlo scheme as we do here. This error then serves as upper bound on the normalized $L^2_{\Q}$-errors of $V_{\bm X, t}$, for any $t<T$. Figures \ref{error_V1_minput}, \ref{error_V1_maxcall}, \ref{error_V1_barrier} and Figures \ref{error_VT_minput}, \ref{error_VT_maxcall}, \ref{error_VT_barrier} show the decrease of the normalized $L^2_\Q$-errors with respect to the training sample size $n$ for $V_{\bm X, 1}$ and $V_{\bm X, T}$, respectively.

\begin{table}[h]
\centering
  \begin{tabular}{|l|r|r|r|r|r|}
\hline
Payoff &    $V_{\bm X, 0}$ &        $V_{\bm X, 1}$& $V_{\bm X, T}$\\
\hline\hline
Mint-put & 0.1942&  1.827& 10.05   \\
\hline
Max-call $(\gamma = 0)$ & 0.07962& 2.500& 12.35\\
Max-call $(\gamma = 0.15)$ &  0.1031& 2.315 & 11.65\\
\hline
Barrier reverse convertible &0.02198& 0.2506& 5.745\\
\hline
\end{tabular}
\caption{Normalized $L^2_\Q$-error $\|V_t- V_{\bm X, t}\|_{2,\Q}/V_0$ at steps $t=0, 1, T$. All values are expressed in \%.}\label{table1new}
\end{table}

\begin{figure}[p]
    \centering 
\begin{subfigure}{0.45\textwidth}
  \includegraphics[width=\linewidth]{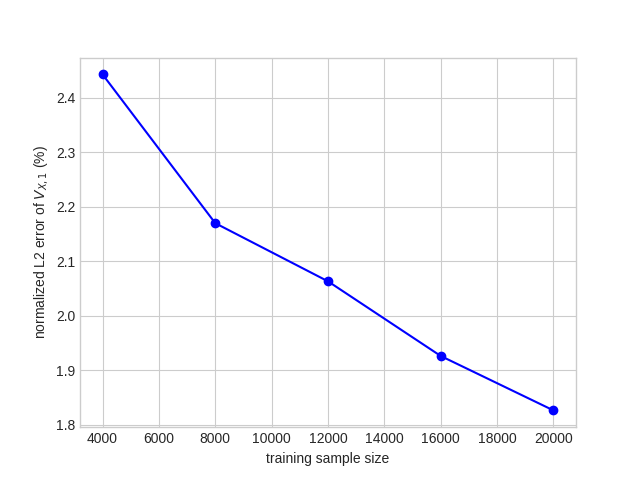}
  \caption{Normalized $L^2_{\Q}$-error of $V_{\bm X, 1}$ in \%}
  \label{error_V1_minput}
\end{subfigure}\hfil 
\begin{subfigure}{0.45\textwidth}
  \includegraphics[width=\linewidth]{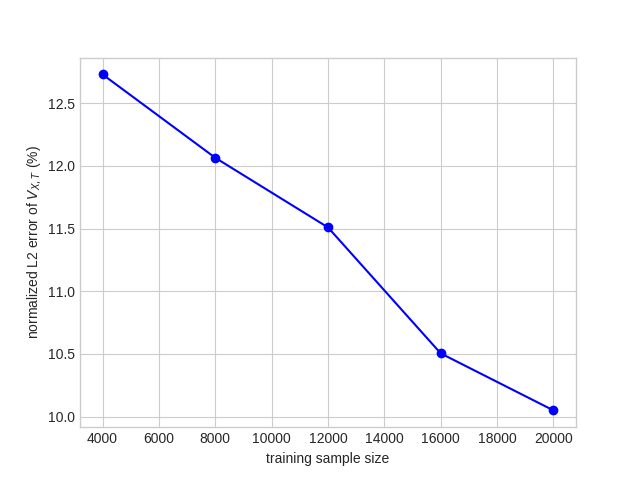}
  \caption{Normalized $L^2_{\Q}$-error of $V_{\bm X, T}$ in \%}
  \label{error_VT_minput}
\end{subfigure}
\medskip
\begin{subfigure}{0.45\textwidth}
  \includegraphics[width=\linewidth]{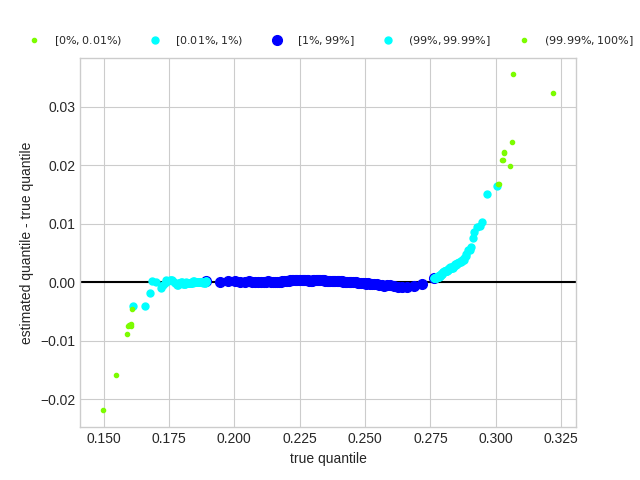}
  \caption{Detrended Q-Q plot of $V_{\bm X, 1}$}
  \label{qqplot_V1_minput}
\end{subfigure}\hfil 
\begin{subfigure}{0.45\textwidth}
  \includegraphics[width=\linewidth]{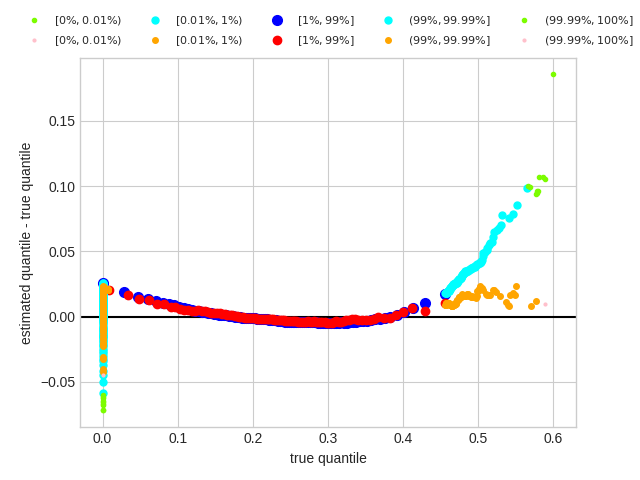}
  \caption{Detrended Q-Q plot of $V_{\bm X, T}$}
  \label{qqplot_VT_minput}
\end{subfigure}
\caption{Results for the min-put. In the detrended Q-Q plots, the blue, cyan, and lawngreen (red, orange, and pink) dots are built using the test (training) data. $[0\%, 0.01\%)$ refer to the quantiles of levels $\{0.001\%,0.002\%, \cdots, 0.009\%\}$, $[0.01\%, 1\%)$ refer to the quantiles of levels $\{0.01\%,0.02\%, \cdots, 0.99\%\}$, $[1\%, 99\%]$ refer to the quantiles of levels $\{1\%,2\%, \cdots, 99\%\}$, $(99\%, 99.99\%]$ refer to the quantiles of levels $\{99.01\%,99.02\%, \cdots, 99.99\%\}$, and $(99.99\%, 100\%]$ refer to the quantiles of levels $\{99.991\%,99.992\%, \cdots, 100\%\}$. }
\label{fig_minput}
\end{figure}

\begin{figure}[p]
    \centering 
\begin{subfigure}{0.45\textwidth}
  \includegraphics[width=\linewidth]{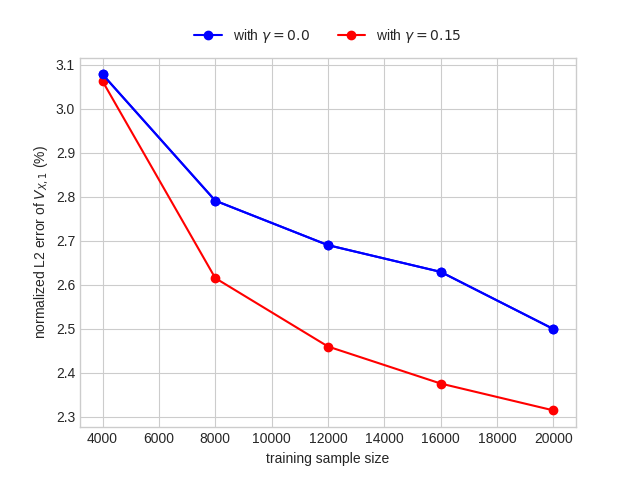}
  \caption{Normalized $L^2_{\Q}$-error of $V_{\bm X, 1}$ in \%}
  \label{error_V1_maxcall}
\end{subfigure}\hfil 
\begin{subfigure}{0.45\textwidth}
  \includegraphics[width=\linewidth]{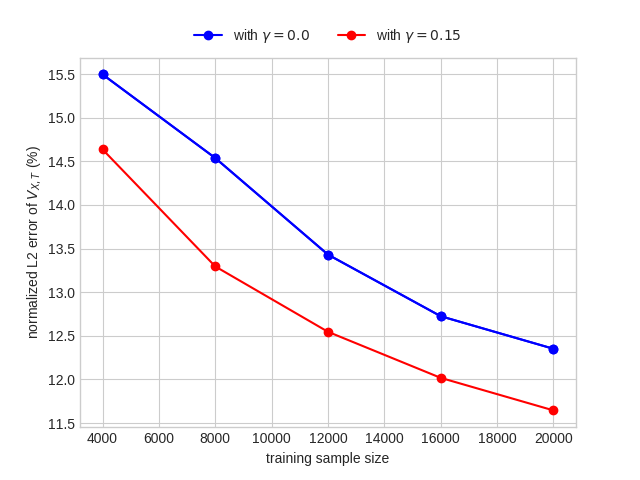}
  \caption{Normalized $L^2_{\Q}$-error of $V_{\bm X, T}$ in \%}
  \label{error_VT_maxcall}
\end{subfigure}
\medskip
\begin{subfigure}{0.45\textwidth}
  \includegraphics[width=\linewidth]{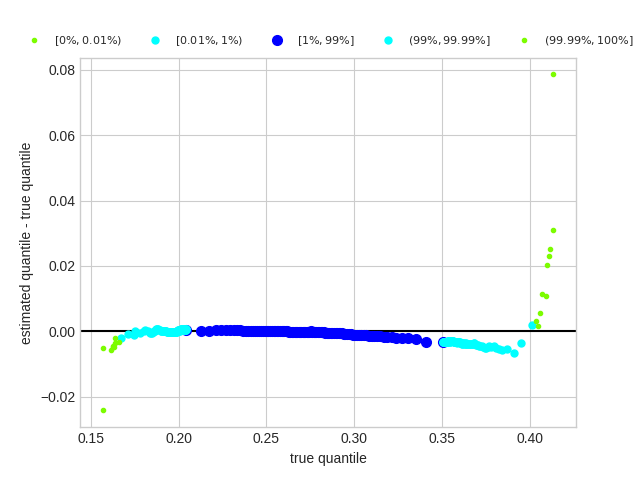}
  \caption{Detrended Q-Q plot of $V_{\bm X, 1}$ for $\gamma = 0$}
  \label{qqplot_V1_maxcall000}
\end{subfigure}\hfil 
\begin{subfigure}{0.45\textwidth}
  \includegraphics[width=\linewidth]{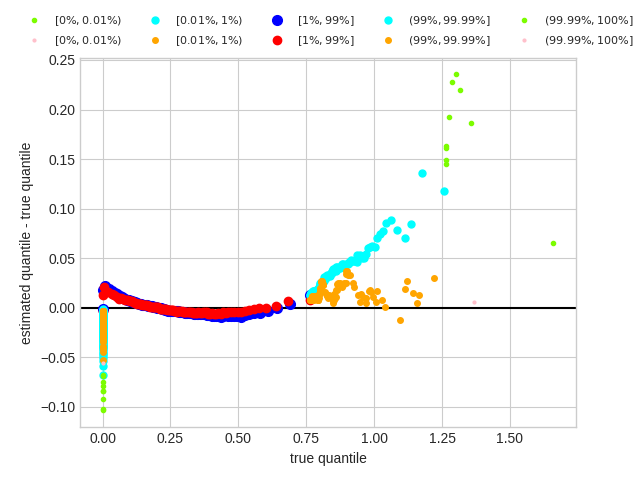}
  \caption{Detrended Q-Q plot of $V_{\bm X, T}$ for $\gamma = 0$}
  \label{qqplot_VT_maxcall000}
\end{subfigure}
\medskip
\begin{subfigure}{0.45\textwidth}
  \includegraphics[width=\linewidth]{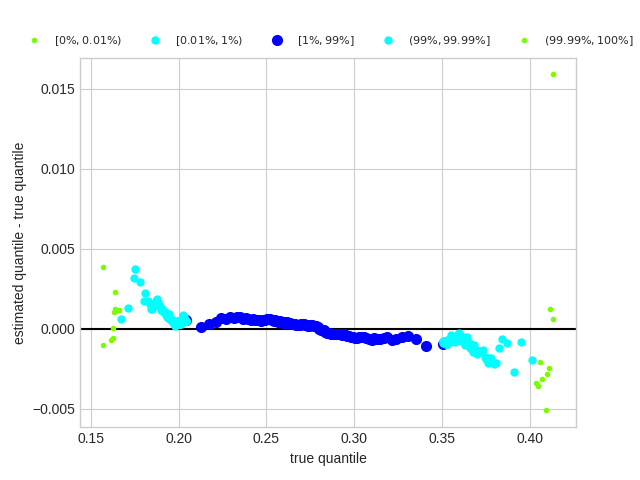}
  \caption{Detrended Q-Q plot of $V_{\bm X, 1}$ for $\gamma = 0.15$}
  \label{qqplot_V1_maxcall015}
\end{subfigure}\hfil 
\begin{subfigure}{0.45\textwidth}
  \includegraphics[width=\linewidth]{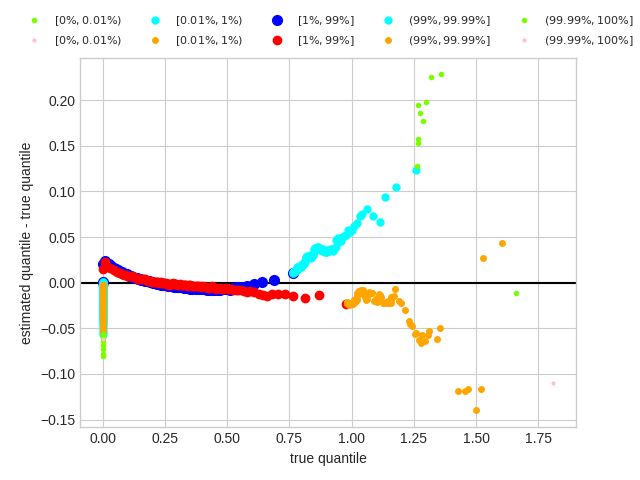}
  \caption{Detrended Q-Q plot of $V_{\bm X, T}$ for $\gamma = 0.15$}
  \label{qqplot_VT_maxcall015}
\end{subfigure}
\caption{Results for the max-call with $\gamma = 0.0$ and $\gamma = 0.15$. In the detrended Q-Q plots, the blue, cyan, and lawngreen (red, orange, and pink) dots are built using the test (training) data. $[0\%, 0.01\%)$ refer to the quantiles of levels $\{0.001\%,0.002\%, \cdots, 0.009\%\}$, $[0.01\%, 1\%)$ refer to the quantiles of levels $\{0.01\%,0.02\%, \cdots, 0.99\%\}$, $[1\%, 99\%]$ refer to the quantiles of levels $\{1\%,2\%, \cdots, 99\%\}$, $(99\%, 99.99\%]$ refer to the quantiles of levels $\{99.01\%,99.02\%, \cdots, 99.99\%\}$, and $(99.99\%, 100\%]$ refer to the quantiles of levels $\{99.991\%,99.992\%, \cdots, 100\%\}$. }
\label{fig_maxcall_highd}
\end{figure}

\begin{figure}[p]
    \centering 
\begin{subfigure}{0.45\textwidth}
  \includegraphics[width=\linewidth]{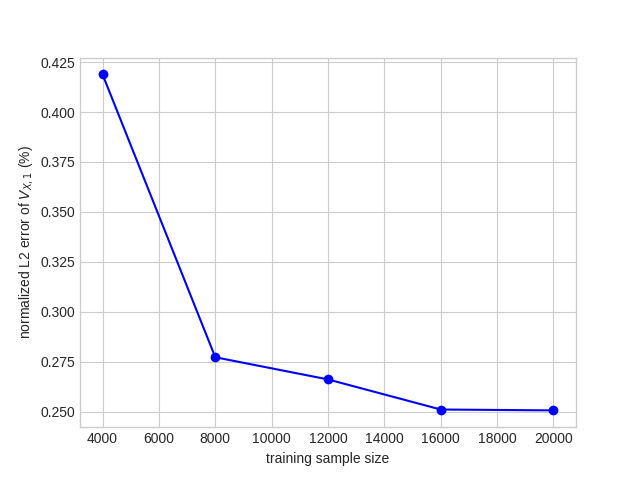}
  \caption{Normalized $L^2_{\Q}$-error of $V_{\bm X, 1}$ in \%}
  \label{error_V1_barrier}
\end{subfigure}\hfil 
\begin{subfigure}{0.45\textwidth}
  \includegraphics[width=\linewidth]{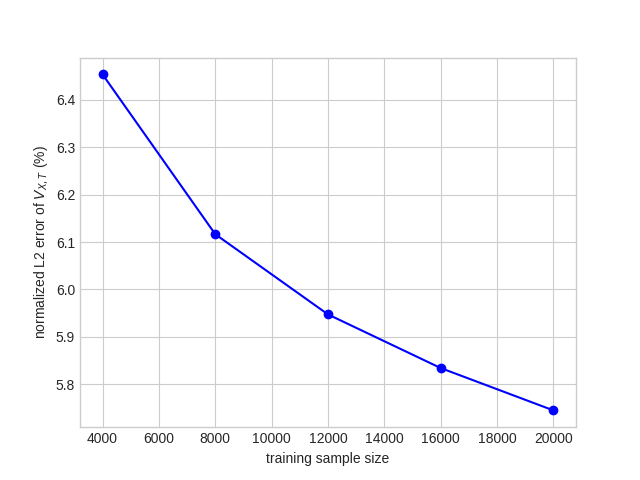}
  \caption{Normalized $L^2_{\Q}$-error of $V_{\bm X, T}$ in \%}
  \label{error_VT_barrier}
\end{subfigure}
\medskip
\begin{subfigure}{0.45\textwidth}
  \includegraphics[width=\linewidth]{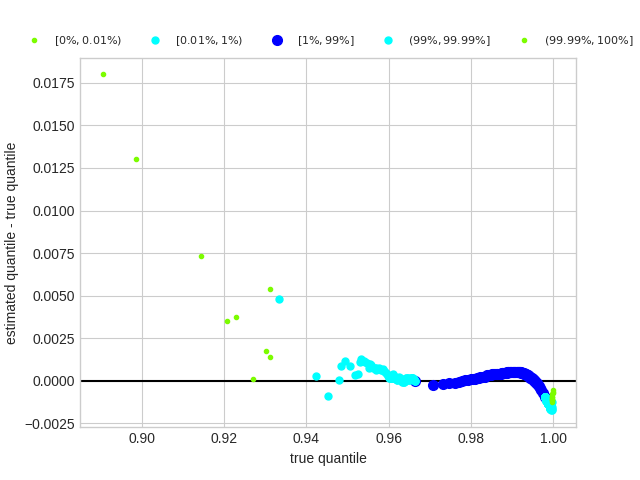}
  \caption{Detrended Q-Q plot of $V_{\bm X, 1}$}
  \label{qqplot_V1_barrier}
\end{subfigure}\hfil 
\begin{subfigure}{0.45\textwidth}
  \includegraphics[width=\linewidth]{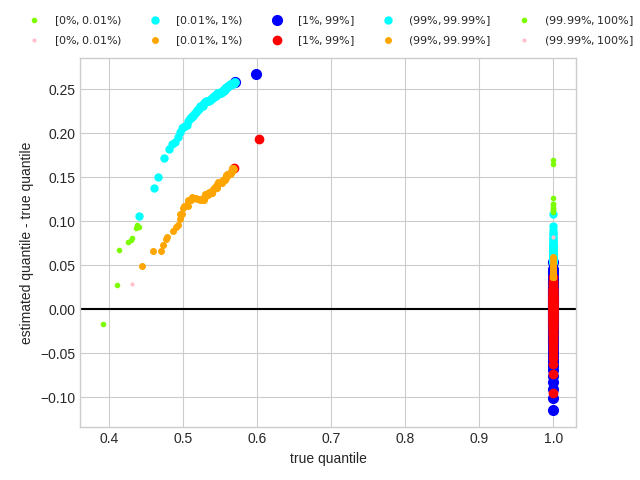}
  \caption{Detrended Q-Q plot of $V_{\bm X, T}$}
  \label{qqplot_VT_barrier}
\end{subfigure}
\caption{Results for the barrier reverse convertible. In the detrended Q-Q plots, the blue, cyan, and lawngreen (red, orange, and pink) dots are built using the test (training) data. $[0\%, 0.01\%)$ refer to the quantiles of levels $\{0.001\%,0.002\%, \cdots, 0.009\%\}$, $[0.01\%, 1\%)$ refer to the quantiles of levels $\{0.01\%,0.02\%, \cdots, 0.99\%\}$, $[1\%, 99\%]$ refer to the quantiles of levels $\{1\%,2\%, \cdots, 99\%\}$, $(99\%, 99.99\%]$ refer to the quantiles of levels $\{99.01\%,99.02\%, \cdots, 99.99\%\}$, and $(99.99\%, 100\%]$ refer to the quantiles of levels $\{99.991\%,99.992\%, \cdots, 100\%\}$. }
\label{fig_barrier}
\end{figure}

Second, we draw the detrended Q-Q plots of $V_{\bm X, 1}$ and $V_{\bm X, T}$, using the $n_{test}$ test samples and $n$ training samples, respectively. Thereto, we compute the empirical left quantiles of $V_{\bm X, t}$ and $V_t$ at the levels of $\{0.001\%, 0.002\%, \dots, 0.009\%\}$, $\{0.01\%, 0.02\%,\dots, 0.99\%\}$, $\{1\%,2\%,\dots,99\%\}$, $\{99.01\%, 99.02\%, \dots, 99.99\%\}$, and $\{99.991\%, 99.992\%,\cdots,  100\%\}$.\footnote{Note that for the test sample of size $n_{test}=10^5$, the left $0.001\%$-quantile ($100\%$-quantile) corresponds to the smallest (largest) sample value. For the training sample of size $n_{train}=2\times 10^4$, the same holds, while the ten left- and right-most quantiles collapse to two values, respectively.} The detrended quantiles (difference between estimated quantiles minus true quantiles) are then plotted against the true quantiles. Figures \ref{qqplot_V1_minput}, \ref{qqplot_V1_maxcall000}, \ref{qqplot_V1_maxcall015}, \ref{qqplot_V1_barrier} and Figures \ref{qqplot_VT_minput}, \ref{qqplot_VT_maxcall000}, \ref{qqplot_VT_maxcall015}, \ref{qqplot_VT_barrier} show the detrended Q-Q plots of $V_{\bm X, 1}$ and $V_{\bm X, T}$, respectively. We observe that the detrended Q-Q plot of $V_{\bm X, 1}$ is significantly better than that of $V_{\bm X, T}$, which is in line with our previous findings for the corresponding relative $L^2_\Q$-errors. Notably, Figure \ref{qqplot_VT_barrier} reveals that for less than $3\%$ (as indicated by the two leftmost red dots) of the training sample (that is, less than $600$ points out of $n= 20{,}000$) the embedded min-put options in the barrier reverse convertible are triggered and in the money. For the remaining sample points the payoff is equal to the face value, $F=1$. And yet, as Figure \ref{qqplot_V1_barrier} shows, this is enough for our algorithm to learn the payoff function such that $V_{\bm X, 1}$ is remarkably close to the ground truth with a normalized $L^2_\Q$-error of $0.251\%$, as reported in Table \ref{table1new}. Figure \ref{fig_maxcall_highd} shows the benefit in using $\gamma>0$ over $\gamma = 0$ for the unbounded payoff of the max-call, which is consistent with Theorem \ref{thmcinew}. We also computed the normalized $L^2_\Q$-errors and detrended Q-Q plots for min-put and barrier reverse convertible with $\gamma = 0.15$, and we found slightly better, unreported, results than with $\gamma = 0$, which are available from the authors upon request. We expect that our results can be further improved by choosing the sampling measure $\widetilde\Q\sim\Q$ more tailored to the specific underlying portfolio payoff, leading to more balanced training samples. We leave this up for future research.

Third, as risk management application, we compute the value at risk and expected shortfall of long and short positions in the above portfolios. Thereto, we recall the definitions that can also be found in \cite[Chapter~4]{foe_sch_04}. For a confidence level $\alpha \in (0,1)$, the value at risk is defined as left $\alpha$-quantile of the loss distribution, $\mathrm{VaR}_\alpha(L)=\inf\{y \mid \Pa[L\le y] \ge \alpha\}$, and the expected shortfall is given by $\mathrm{ES}_{\alpha}(L) = \frac{1}{1-\alpha}\E_\Pa[(L-q_\alpha)^+] + q_\alpha$, where $q_\alpha$ is an $\alpha$-quantile of $L$, e.g., $q_\alpha=\mathrm{VaR}_\alpha(L)$. Both value at risk and expected shortfall are standard risk measures in practice. For instance, insurance companies have to compute the value at risk at level $\alpha=99.5\%$ and the expected shortfall at level $\alpha=99\%$, under Solvency II and the Swiss Solvency Test, respectively. For more discussion on these two risk measures we refer to the book \cite{mcn_etal_15}. Henceforth, we assume the real-world measure $\Pa=\Q$, for simplicity. For the three above examples, we compute normalized value at risk and expected shortfall of the 1-period loss $\mathrm{L}=V_{0}-V_{1}$ and its estimator $\mathrm{L}_{\bm X}=V_{\bm X, 0}-V_{\bm X, 1}$ of a long position, namely $\mathrm{VaR}_{99.5\%}(\mathrm{L})/V_0$, $\mathrm{ES}_{99\%}(\mathrm{L})/V_0$, $\mathrm{VaR}_{99.5\%}(\mathrm{L}_{\bm X})/V_0$, and $\mathrm{ES}_{99\%}(\mathrm{L}_{\bm X})/V_0$. We compute the same risk measures for a short position, namely $\mathrm{VaR}_{99.5\%}(-\mathrm{L})/V_0$, $\mathrm{ES}_{99\%}(-\mathrm{L})/V_0$, $\mathrm{VaR}_{99.5\%}(-\mathrm{L}_{\bm X})/V_0$, and $\mathrm{ES}_{99\%}(-\mathrm{L}_{\bm X})/V_0$. Tables \ref{table2new} and \ref{table3new} show that risk measure estimates of the long positions are strikingly accurate. Risk measure estimates of the short positions are less good. However, note that these risk measures are a tough metric for our estimators because they focus on the tails of the distribution beyond the $1\%$- and $99\%$-quantiles, respectively. All these observations are in line with the detrended Q-Q plots discussed above. In fact, in Figures \ref{qqplot_V1_minput}, \ref{qqplot_V1_maxcall000}, \ref{qqplot_V1_barrier} we see that our method gives a better estimation of the left tail distribution than the right tail distribution. Also note the benefit of choosing the sampling measure $\widetilde \Q$ over $\Q$. In fact, all risk measurements, except for the expected shortfall of $\mathrm{L}_{\bm X}$, are more accurate when the sampling is from $\widetilde\Q$ than when it is from $\Q$.

\begin{table}[h]
\centering
  \begin{tabular}{|l|r|r|r|r|}
\hline
Payoff &    $\mathrm{VaR}_{99.5\%}(\mathrm{L})$ & $\mathrm{VaR}_{99.5\%}(\mathrm{L}_{\bm X})$& $\mathrm{VaR}_{99.5\%}(-\mathrm{L})$ & $\mathrm{VaR}_{99.5\%}(-\mathrm{L}_{\bm X})$\\
\hline\hline
Mint-put & 2063& 2083 & 2058 & 2123   \\
\hline
Max-call $(\gamma = 0)$ & 2800& 2802& 3071 & 2961\\
Max-call $(\gamma = 0.15)$ &  2800& \bf{2801} & 3071 &\bf{3041}\\
\hline
Barrier revere convertible &264.1& 264.6& 99.83 & 85.94\\
\hline
\end{tabular}
\caption{Normalized true and estimated value at risk $\mathrm{VaR}_{99.5\%}(\mathrm{L})/V_0$, $\mathrm{VaR}_{99.5\%}(\mathrm{L}_{\bm X})/V_0$, $\mathrm{VaR}_{99.5\%}(-\mathrm{L})/V_0$, and $\mathrm{VaR}_{99.5\%}(-\mathrm{L}_{\bm X})/V_0$. All values are expressed in basis points.}\label{table2new}
\end{table}

\begin{table}[h]
\centering
  \begin{tabular}{|l|r|r|r|r|}
\hline
Payoff &    $\mathrm{ES}_{99\%}(\mathrm{L})$ & $\mathrm{ES}_{99\%}(\mathrm{L}_{\bm X})$& $\mathrm{ES}_{99\%}(-\mathrm{L})$ & $\mathrm{ES}_{99\%}(-\mathrm{L}_{\bm X})$\\
\hline\hline
Mint-put & 2141& 2168& 2118 & 2219   \\
\hline
Max-call $(\gamma = 0)$ & 2890& \bf{2880}& 3205 & 3090\\
Max-call $(\gamma = 0.15)$ &  2890& 2870 & 3205 &\bf{3160}\\
\hline
Barrier revere convertible &284.7&283.2&101.2&86.63\\
\hline
\end{tabular}
\caption{Normalized true and estimated expected shortfall $\mathrm{ES}_{99\%}(\mathrm{L})/V_0$, $\mathrm{ES}_{99\%}(\mathrm{L}_{\bm X})/V_0$, $\mathrm{ES}_{99\%}(-\mathrm{L})/V_0$, and $\mathrm{ES}_{99\%}(-\mathrm{L}_{\bm X})/V_0$. All values are expressed in basis points.}\label{table3new}
\end{table}

As another risk management application, we now sketch how to compute the $\Q$-variance optimal hedging strategies \eqref{psieqhedge} for the above examples, as outlined in Section \ref{secintro}. The tradable hedging instruments here would naturally be the underlying stocks with discounted value processes $G_{i,t}= \e^{-r\sum_{s=1}^t\Delta_s} S_{i,t}$. The gains, in view of \eqref{SeqBS}, accordingly are given by $\Delta G_{i,t}=G_{i,t-1}\left( \exp[ \sigma_i^\top X_t \sqrt{{\Delta_t}} - \|\sigma_i\|^2 {\Delta_t}/2]-1\right)$. It straightforward to compute the ingredients that give the approximate hedging strategy \eqref{psieqhedge}, where we replace $\Delta V_t$ by $\Delta V_{\bm X,t}$. First, we have $\E_\Q[ \Delta G_{i,t}  \Delta G_{j,t} \mid\Fcal_{t-1}]=G_{i,t-1}G_{j,t-1}\left( \exp[ \sigma_i^\top \sigma_j \Delta_t]-1\right)$. Second, $\E_\Q[ \Delta G_{i,t} \Delta V_{\bm X,t}\mid\Fcal_{t-1}]$ can readily be computed in closed form using Lemma~\ref{lemcompnewX}, \eqref{condKernEmb} and \eqref{eqmsBS}. For the sake of brevity, we leave the full hedging implementations for future research.

As for the scalability of our method, we conducted similar experiments as those presented above with larger values of $d$ and $T$. Results suggested that a training sample of size $n=20{,}000$ is not enough to deal with problems of dimension $d\times T\ge 60$ since the normalized $L^2_\Q$-errors of $V_{\bm X, 1}$ were higher than $5\%$.
Solutions to deal with high dimensional problems with kernel methods exist in the literature and they include random projections such as Nystr\"om approximation \cite{smo_sch_00, wil_see_01} and random features \cite{rah_rec_07}. We tried a recent algorithm \cite{rud_etal_17}, which is based on Nystr\"om approximation, however the normalized $L^2_\Q$-errors of $V_{\bm X, 1}$ were still higher than $5\%$. These results suggest that without a larger training sample and more computational resources, it may be difficult to reach a smaller normalized $L^2_\Q$-error for dimensions $d\times T\ge 60$.

\section{Conclusion}\label{secconc}

We introduce a unified framework for quantitative portfolio risk management, based on the dynamic value process of the portfolio. We approximate and learn the value process from a finite sample of the cumulative cash flow of the portfolio using kernel methods. Thereto we deploy the theory of reproducing kernel Hilbert spaces, which we find suitable for the learning of functions using simulated samples. We exploit tractable kernels in conjunction with the kernel representer theorem to obtain the sample estimator of the value process in closed form. We show asymptotic consistency and derive finite sample error bounds, which have been established in the previous literature only under regularity and boundedness assumptions on the target function that do not hold for finance applications in general. Numerical experiments for exotic, path-dependent options in the multivariate Black--Scholes model in large dimensions show good results for a moderate training sample size.

Our approach can be extended in various directions. One direction is to further develop the above examples to be deployed for production. This includes the full implementation of the sketched hedging strategies in particular. Another direction is to further explore the scalability of the presented methods to higher dimensional sample spaces. There is a large activity in the machine learning research that addresses the scalability of kernel methods. New findings could also benefit applications in portfolio valuation and risk management. This is ongoing research. A third direction is to value Bermudan options, see, e.g., \cite{lam_lap_2011}. In quantitative finance this is a challenging problem and numerical methods are required to estimate the optimal value process, see, e.g., \cite[Introduction]{bec_etal_2019} for references to several of these methods. The approach we developed in this paper can also be applied to deal with such a problem. In this case, our approach falls in the class of ``regress-later'' methods presented in \cite{gla_yu_04}. In fact, in ``regress-later'' the value functions are estimated by a projection onto a finite number of basis functions, whereas with our method they would be estimated by kernel ridge regressions. Our approach would yield closed form estimators of the value process and finite sample guarantees.

\begin{appendix}

\clearpage

\section{Some facts about Hilbert spaces}\label{secfactsH}

For the convenience of the reader we collect here some basic definitions and facts about Hilbert spaces, on which our framework builds. We first recall some basics. We then introduce kernels and reproducing kernel Hilbert spaces. We then review compact operators and random variables on separable Hilbert spaces. For more background, we refer to, e.g., the textbooks \cite{kat_95,cuc_zho_07,ste_chr_08,pau_rag_16}.

\subsection{Basics}

We start with briefly recalling some elementary facts and conventions for Hilbert spaces. Let $H$ be a Hilbert space and $\Ical$ some (not necessarily countable) index set. We call a set $\{\phi_i\mid i\in \Ical\}$ in $H$ an \emph{orthonormal system (ONS)} in $H$ if $\langle  \phi_i,\phi_j\rangle_H=\delta_{ij}$, for the Kronecker Delta $\delta_{ij}$. We call $\{\phi_i\mid i\in \Ical\}$ an \emph{orthonormal basis (ONB)} of $H$ if it is an ONS whose linear span is dense in $H$. In this case, for every $h\in H$, we have $h=\sum_{i\in\Ical} \langle h,\phi_i\rangle_H \phi_i$ and the Parseval identify holds, $\|h\|_H^2 =\sum_{i\in\Ical} |\langle h,\phi_i\rangle_H|^2$, where only a countable number of coefficients $\langle h,\phi_i\rangle_H$ are different from zero. Here we recall the elementary fact that the closure of a set $A$ in $H$ is equal to the set of all limit points of sequences in $A$, see \cite[Theorem 2.37]{ali_bor_99}.

\subsection{Reproducing kernel Hilbert spaces}\label{ssecRKHS}

Let $k:E\times E\to \R$ be a kernel with RKHS $\Hcal$, as introduced at the beginning of Section~\ref{secapprox}. We collect some basic facts that are used in the paper.

The following lemma gives some useful representations of $k$, see \cite[Theorems 2.4 and 12.11]{pau_rag_16}, which hold for arbitrary set $E$.
\begin{lemma}\label{thmmercernew}
\begin{enumerate}
  \item\label{thmmercernew1} Let $\{\phi_i\mid i\in \Ical\}$ be an ONB of $\Hcal$. Then $k(x,y)=\sum_{i\in\Ical}\phi_i(x)\phi_i(y)$ where the series converges pointwise.
  \item\label{thmmercernew2} There exists a stochastic process $\phi_\omega(x)$, indexed by $x\in E$, on some probability space $(\Omega,\Fcal,\M)$ such that $\omega \mapsto\phi_\omega(x):\Omega\to\R$ are square-integrable random variables and $k(x,y) = \int_\Omega \phi_\omega(x) \phi_\omega(y)\,d\M(\omega)$.
\end{enumerate}
\end{lemma}

The next lemma provides sufficient conditions for continuity of the functions in $\Hcal$ and separability of $\Hcal$.
\begin{lemma}\label{lemsepHnew}
  Assume $(E,\tau)$ is a topological space. Then the following hold:
\begin{enumerate}
  \item\label{lemsepHnew1} If $k$ is continuous at the diagonal in the sense that
  \begin{equation}\label{contdiag}
       \text{$\lim_{y\to x} k(x,y) =\lim_{y\to x} k(y,y) = k(x,x)$ for all $x\in E$,}
      \end{equation}
then every $h\in\Hcal$ is continuous.

  \item\label{lemsepHnew2} If every $h\in\Hcal$ is continuous and $(E,\tau)$ is separable, then $\Hcal$ is separable.
\end{enumerate}
\end{lemma}

\begin{proof}
{\ref{lemsepHnew1}}: Let $h\in \Hcal$. Then $|h(x)-h(y)| \le \| k(\cdot,x)-k(\cdot,y)\|_{\Hcal}\|h\|_{\Hcal}$, by \eqref{eqfundamentalpnew}, with $ \| k(\cdot,x)-k(\cdot,y)\|_{\Hcal}= (k(x,x)-2 k(x,y)+k(y,y))^{1/2}$, and \eqref{contdiag} implies that $h$ is continuous.

{\ref{lemsepHnew2}}: This follows from \cite[Theorem~15]{ber_tho_2004}.
\end{proof}

\subsection{Compact operators on Hilbert spaces}\label{seccopoH}

Let $H,H'$ be separable Hilbert spaces. A linear operator (or simply an operator) $T:H\to H'$ is \emph{compact} if the image $(Th_n)_{n\ge 1}$ of any bounded sequence $(h_n)_{n \ge 1}$ of $H$ contains a convergent subsequence.

An operator $T:H\to H'$ is \emph{Hilbert--Schmidt} if $\|T\|_2=(\sum_{i\in I} \|T \phi_i\|^2_{H'})^{1/2} <\infty$, and \emph{trace-class} if $\|T\|_1=\sum_{i\in I} \langle (T^\ast T)^{1/2}\phi_i, \phi_i \rangle_H  <\infty$, for some (and thus any) ONB $\{\phi_i\mid i\in I\}$ of $H$. We denote by $\|T\|=\sup_{h\in H\setminus\{0\}} \|Th\|_{H'}/\|h\|_H$ the usual operator norm. We have $\|T\|\le\|T\|_2\le \|T\|_1$, thus trace-class implies Hilbert--Schmidt, and every Hilbert--Schmidt operator is compact.

A self-adjoint operator $T:H\to H$ is \emph{nonnegative} if $\langle Th, h \rangle_H \ge 0$, for all $h \in H$. Let $T:H\to H$ be a nonnegative, self-adjoint, compact operator. Then there exists an ONS $\{\phi_i\mid i\in I\}$, for a countable index set $I$, and eigenvalues ${\mu}_i>0$ such that the \emph{spectral representation} holds: $T = \sum_{i\in I} {\mu}_i\langle\cdot,\phi_i\rangle_\Hcal \phi_i$ .

\subsection{Random variables in Hilbert spaces}\label{appLTH}

Let $H$ be a separable Hilbert space and $\Q$ be a probability measure on $H$. The characteristic function $\widehat \Q:H\to\C$ of $\Q$ is defined by $\widehat \Q(h) = \int_H \e^{i \langle y,h \rangle_H} \Q(dy)$, $h \in H$.

If $\int_H \|y\|_H \Q(dy) < \infty$, then the mean $m_\Q=\int_H y  \Q(dy)$ of $\Q$ is well defined,  where the integral is in the Bochner sense, see, e.g., \cite[Section 1.1]{da_prato}. If $\int_H \|y\|^2_H \Q(dy) < \infty$, then the covariance operator $Q_\Q$ of $\Q$ is defined by $\langle Q_\Q h_1, h_2  \rangle_H = \int_H \langle y, h_1  \rangle_H \langle y, h_2  \rangle_H \Q(dy) - \langle m_\Q,h_1\rangle_H \langle m_\Q,h_2\rangle_H$, $h_1, h_2 \in H$. Hence $Q_\Q$ is a nonnegative, self-adjoint, trace-class operator. The measure $\Q$ is \emph{Gaussian}, $\Q\sim\Ncal(m_\Q,Q_\Q)$, if $\widehat{\Q}(h) = \e^{i \langle m_\Q,h \rangle_H - \frac{1}{2} \langle Q_\Q h, h \rangle_H}$, see \cite[Section 2.3]{da_prato}.

Now let $(\Omega, \Fcal, \mathbb{P})$ a probability space, and $(Y_n)_{n \ge 1}$ a sequence of i.i.d.\ $H$-valued random variables with distribution $Y_1 \sim \Q$. Assume that $\E[Y_1]=0$. If $\mathbb{E}[\|Y_1\|^2_H] < \infty$, then  $(Y_n)_{n \ge 1}$ satisfies the following \emph{law of large numbers}, see \cite[Theorem~2.1]{hoffmann_pisier},
\begin{equation}\label{Hlln}
    \frac{1}{n} \sum_{i= 1}^n Y_i \xrightarrow{a.s.} 0,
\end{equation}
and the \emph{central limit theorem}, see \cite[Theorem~3.6]{hoffmann_pisier},
\begin{equation}\label{Hclt}
\frac{1}{\sqrt{n}} \sum_{i = 1 }^n Y_i \xrightarrow{d}  \mathcal{N}(0,Q_\Q).
\end{equation}
If $\|Y_1\|_H \le 1$ a.s., then $(Y_n)_{n \ge 1}$ satisfies the following concentration inequality, called the \emph{Hoeffding inequality}, see \cite[Theorem 3.5]{pin_94},
\begin{equation}\label{hoeffdingeq}
    \Pa\left[\left\|\frac{1}{n} \sum_{i = 1}^n Y_i \right\|_\Hcal \ge \tau\right] \le 2 \e^{-\frac{\tau^2 n}{2}},\quad \tau>0.
\end{equation}

\section{Proofs}\label{secproofs}

We collect here all proofs from the main text.

\subsection{Properties of the embedding operator}\label{ssecHS}

For completeness, we first recall some basic properties of the operator $J$ defined in Section~\ref{secapprox}, which are used throughout the paper.

The operator $JJ^\ast$ is clearly nonnegative and self-adjoint, and trace-class, since $J$ and $J^\ast$ are Hilbert--Schmidt. Therefore, there exists an ONS $\{v_i\mid i\in I\}$ in $L^2_\Q$ and eigenvalues ${\mu}_i>0$, $i\in I$, for a countable index set $I$ with $|I|=\dim (\Ima J^\ast)$, such that $\sum_{i\in I} {\mu}_i<\infty$ and the spectral representation

\begin{equation}\label{SRJJa}
  J J^\ast = \sum_{i\in I} {\mu}_i \langle \cdot,v_i\rangle_\Q v_i
\end{equation}
holds. The summability of the eigenvalues ${\mu}_i$ implies that the convergence in \eqref{SRJJa} holds in the Hilbert--Schmidt norm sense. By the open mapping theorem, and since $\ker J J^\ast =\ker J^\ast$, we obtain that $JJ^\ast$ is invertible if and only if $\ker J^\ast=\{0\}$ and $\dim(L^2_\Q)<\infty$. It follows by inspection that $u_i = {\mu}_i^{-1/2} J^\ast v_i$ form an ONS in $\Hcal$ and that $J^\ast J u_i = {\mu}_i^{-1/2} J^\ast J J^\ast v_i = {\mu}_i u_i$. Then, since $\Hcal = \overline{\Ima J^\ast} \oplus \ker J $ and $\overline{\Ima J^\ast} = \overline{\spn\{u_i \mid i \in I\}}$, $J^\ast J$ has the spectral representation
\begin{equation}\label{SRJaJ}
  J^\ast J = \sum_{i\in I} {\mu}_i \langle \cdot,u_i\rangle_\Hcal u_i.
\end{equation}
As in \eqref{SRJJa}, the convergence in \eqref{SRJaJ} holds in the Hilbert--Schmidt norm sense. Furthermore, by analogous arguments as for $JJ^\ast$, we obtain that $J^\ast J$ is invertible if and only if $\ker J =\{0\}$ and $\dim(\Hcal)<\infty$. As a straightforward consequence of $\Hcal = \overline{\Ima J^\ast} \oplus \ker J$ and $L^2_\Q = \overline{\Ima J} \oplus \ker J^\ast$, we have the canonical expansions of $J^\ast$ and $J$ corresponding to \eqref{SRJJa} and \eqref{SRJaJ},
\begin{equation}\label{JJaexpansion}
 J^\ast = \sum_{i\in I}  {\mu}_i^{1/2}\langle \cdot,v_i\rangle_\Q u_i ,\quad  J = \sum_{i\in I} {\mu}_i^{1/2}\langle \cdot,u_i\rangle_\Hcal v_i.
\end{equation}

\begin{remark}
Note that \eqref{ass0} holds if and only if $J:\Hcal\to L^2_\Q$ is Hilbert--Schmidt. Indeed, \cite[Example 2.9]{ste_sco_12} shows a separable RKHS $\Hcal$ for which $J:\Hcal\to L^2_\Q$ is compact, but not Hilbert--Schmidt, and $\|\kappa\|_{2,\Q}=\infty$. That example also shows that $\kappa\notin\Hcal$ in general.
\end{remark}

\subsection{Proof of Lemma \ref{lemconv}}

Let $\{v_i \mid i \in I\}$ be the ONS in $L^2_\Q$ given in Section \ref{ssecHS}. Then $f_0 = \sum_{i \in I} \langle f_0, v_i \rangle_{2,\Q} v_i $. As $f_\lambda = J(J^\ast J +\lambda)^{-1} J^\ast f_0$, the spectral representation \eqref{SRJaJ} of $J^\ast J$ and  the canonical expansions \eqref{JJaexpansion} of $J^\ast$ and $J$ give $f_\lambda = \sum_{i \in I} \frac{{\mu}_i}{{\mu}_i + \lambda} \langle f_0, v_i \rangle_{2,\Q} v_i$. Hence,
\[\|f_0 - f_\lambda\|^2_{2,\Q} = \left\|\sum_{i \in I} \frac{\lambda}{{\mu}_i + \lambda} \langle f_0, v_i \rangle_{2,\Q} v_i \right\|^2_{2,\Q} = \sum_{i \in I} (\frac{\lambda}{{\mu}_i + \lambda})^2 \langle f_0, v_i \rangle^2_{2,\Q}. \]
The result follows from the dominated convergence theorem.

\subsection{Proof of Theorem \ref{thmLT}}

For simplicity, we assume that the sampling measure $\widetilde\Q=\Q$, that is, $w=1$, and omit the tildes. The extension to the general case is straightforward, using \eqref{diagHtilde} and \eqref{eqFXdef}.

We write
\begin{align*}
      f_{\bm X} - f_\lambda  &= (J^\ast_{\bm X}J_{\bm X} + \lambda )^{-1}J_{\bm X}^\ast  f - (J^\ast J + \lambda )^{-1}J^\ast f \\
    & = (J^\ast_{\bm X}J_{\bm X} + \lambda )^{-1} (J_{\bm X}^\ast f - J^\ast f) - ( (J^\ast J + \lambda)^{-1}-(J^\ast_{\bm X}J_{\bm X} + \lambda)^{-1} )J^\ast f .
\end{align*}
Combining this with the elementary factorization
\begin{equation}\label{trick}
  (J^\ast J + \lambda)^{-1} -(J^\ast_{\bm X}J_{\bm X} + \lambda)^{-1}  =  (J^\ast_{\bm X}J_{\bm X} + \lambda)^{-1}(  J^\ast_{\bm X}J_{\bm X}-J^\ast J)(J^\ast J + \lambda)^{-1} ,
\end{equation}
we obtain
\begin{equation}\label{keyeq}
f_{\bm X} - f_\lambda  = (J^\ast_{\bm X}J_{\bm X} + \lambda)^{-1} \left(J_{\bm X}^\ast  f - J^\ast f - ( J^\ast_{\bm X}J_{\bm X}-J^\ast J) f_\lambda  \right)  = (J^\ast_{\bm X}J_{\bm X} + \lambda)^{-1} \frac{1}{n} \sum_{i=1}^n \xi_i,
\end{equation}
where $\xi_i= (f(X^{(i)}) - f_\lambda(X^{(i)})) k_{X^{(i)}} - J^\ast (f - f_\lambda)$ are i.i.d.\ $\Hcal$-valued random variables with zero mean. Moreover, as
\begin{equation} \label{normxi2}
  \begin{aligned}
    \|\xi_i\|^2_{\Hcal} &= (f(X^{(i)}) - f_\lambda(X^{(i)}))^2 \kappa(X^{(i)})^2  + \int_{E^2} (f(x) - f_\lambda(x))(f(y) - f_\lambda(y)) k(x,y) \Q(dx) \Q(dy)\\
    &\quad-2 \int_E (f(X^{(i)})-f_\lambda(X^{(i)}))(f(y) - f_\lambda(y)) k(X^{(i)},y) \Q(dy),
\end{aligned}
\end{equation}
we infer that
\begin{equation}\label{xi_expec}
\E[\|\xi_i \|^2_\Hcal] =  \|(f-f_\lambda)\kappa \|_{2,\Q}^2 -  \|J^\ast (f-f_\lambda)\|^2_{\Hcal}\le  \|(f-f_\lambda)\kappa \|_{2,\Q}^2 \le 2\|f \kappa\|^2_{2,\Q} +  2\|f_\lambda\|_\Hcal^2\| \kappa\|^4_{4,\Q} <\infty ,
\end{equation}
where in the third inequality we used \eqref{eqfundamentalpnew}.

Hence both the law of large numbers in \eqref{Hlln} and the central limit theorem in \eqref{Hclt} apply:
\begin{equation}\label{keyclt1}
    \frac{1}{n} \sum_{i=1}^n \xi_i \xrightarrow{a.s.} 0, \quad  \frac{1}{\sqrt{n}} \sum_{i=1}^n \xi_i \xrightarrow{d} \Ncal(0, C_\xi),
\end{equation}
where $C_\xi$ is the covariance operator of $\xi$, which is given by
\begin{equation}\label{cov_xi}
    \langle C_\xi h, h  \rangle_{\Hcal} =  \| (f-f_\lambda) Jh\|^2_{2,\Q} -  \langle f-f_\lambda , Jh\rangle^2_{2,\Q}, \quad h \in \Hcal.
\end{equation}

From \eqref{keyeq}, \eqref{keyclt1} and Lemma~\ref{lemM1>0} below, the continuous mapping theorem gives $f_{\bm X} \xrightarrow{a.s.}  f_\lambda $, and Slutsky's lemma gives $\sqrt{n}(f_{\bm X} - f_\lambda) \xrightarrow{d} \Ncal(0, Q)$ for the covariance operator $Q=(J^\ast J+\lambda)^{-1}C_\xi (J^\ast J+\lambda)^{-1}$. Using \eqref{cov_xi}, we infer
\begin{align*}
    \langle Q h ,h\rangle_\Hcal& =  \| (f-f_\lambda) J(J^\ast J+\lambda)^{-1}h\|^2_{2,\Q}   -  \langle f-f_\lambda , J(J^\ast J+\lambda)^{-1}h\rangle^2_{2,\Q} \\
    & = \mathbb{V}_\Q[(f  - f_{\lambda} ) (J^\ast J+\lambda)^{-1} h] ,
\end{align*}
as claimed.

\begin{lemma}\label{lemM1>0}
We have $(J^\ast_{\bm X }J_{\bm X }+\lambda)^{-1} \xrightarrow{a.s.} (J^\ast J+\lambda)^{-1}$, as $n \to \infty$.
\end{lemma}

\begin{proof}[Proof of Lemma \ref{lemM1>0}]
Equation \eqref{trick} implies $\|(J^\ast J + \lambda)^{-1} -(J^\ast_{\bm X}J_{\bm X} + \lambda)^{-1} \| \le    \lambda^{-2}\| J^\ast_{\bm X}J_{\bm X}-J^\ast J\|$. Hence it is enough to prove that
\begin{equation}\label{lemM1>0c1}
 J^\ast_{\bm X }J_{\bm X }   \xrightarrow{a.s.}  J^\ast J .
\end{equation}
Thereto, we decompose
\begin{equation}\label{decJaJJaJ}
 J^\ast_{\bm X}J_{\bm X} - J^\ast J = \frac{1}{n} \sum_{i=1}^n \Xi_i,
\end{equation}
where $\Xi_i = \langle \cdot, k_{X^{(i)}} \rangle_\Hcal k_{X^{(i)}} - \int_E \langle \cdot, k_{x} \rangle_\Hcal k_x \Q(dx)$ are i.i.d.\ random Hilbert--Schmidt operators with zero mean. Straightforward calculations show that
\begin{equation}\label{decJaJJaJ2}
    \|\Xi_i\|^2_2 = \kappa(X^{(i)})^4 + \int_{E^2} k(x,y)^2 \Q(dx) \Q(dy) - 2 \int_E k(x,X^{(i)})^2 \Q(dx).
\end{equation}
It follows that
\begin{equation}\label{decJaJJaJ3}
  \E_\Q[\|\Xi_i\|^2_2] = \|\kappa\|^4_{4,\Q} - \int_{E^2} k(x,y)^2 \Q(dx)\Q(dy)<\infty.
\end{equation}
Hence the law of large numbers in \eqref{Hlln} applies and \eqref{lemM1>0c1} follows.
\end{proof}

\subsection{Proof of Theorem \ref{thmcinew}}

As in the proof of Theorem~\ref{thmLT}, we assume that the sampling measure $\widetilde\Q=\Q$, that is, $w=1$, and omit the tildes. The extension to the general case is straightforward, using \eqref{diagHtilde} and \eqref{eqFXdef}.

From \eqref{keyeq}, we infer $\|f_{\bm X} - f_\lambda \|_\Hcal  \le \frac{1}{\lambda} \|\frac{1}{n} \sum_{i=1}^n \xi_i\|_\Hcal$, and hence $\bm Q [\|f_{\bm X} - f_\lambda \|_\Hcal \ge \tau ] \le \bm Q\left[ \frac{1}{\lambda} \|\frac{1}{n} \sum_{i=1}^n \xi_i\|_\Hcal \ge  \tau \right]$. From \eqref{normxi2}, we infer
\[  \|\xi_i\|_\Hcal \le 2 \|(f-f_\lambda)\kappa\|_{\infty, \Q} \le 2\|f \kappa\|_{\infty, \Q} + 2\|f_\lambda\|_\Hcal \|\kappa\|^2_{\infty, \Q} < \infty,\]
where in the second inequality we used \eqref{eqfundamentalpnew}. Hence the Hoeffding inequality in \eqref{hoeffdingeq} applies, so that
\begin{equation}\label{ci_for_xi}
\bm Q \left[\left\|\frac{1}{n} \sum_{i=1}^n \xi_i\right\|_\Hcal \ge  \tau \right] \le  2\e^{-\frac{\tau^2 n}{8 \|(f-f_\lambda)\kappa\|^2_{\infty, \Q}}},\quad \tau>0,
\end{equation}
which implies \eqref{eqCInew}.

\subsection{Proof of Lemma~\ref{lemoptwnew}}
By definition we have $\widetilde\kappa=\kappa/\sqrt{w}$. From~\eqref{diagHtilde} we obtain $\|\widetilde\kappa\|_{\infty,\Q}\ge\|\widetilde\kappa\|_{2,\widetilde\Q}=\|\kappa\|_{2,\Q}$, with equality if and only if $\widetilde\kappa$ is constant $\Q$-a.s. This proves the lemma.

\subsection{Proof of Lemma \ref{lemGEKuniversalnew}}
Denote by $\Hcal_G$ the RKHS corresponding to the Gaussian kernel $k_G(x,y)=\e^{-\alpha\|x-y\|^2}$. It is well known that $\Hcal_G$ is densely embedded in $L^2_\Q$, see \cite[Proposition~8]{sri_fuk_lan_10}. Denote by $\Hcal_E$ the RKHS corresponding to the exponentiated kernel $k_E(x,y)=\e^{\beta x^\top y}$. As $k(x,y)=k_E(x,y)  k_G(x,y)$, and as $\Hcal_E$ contains the constant function, $1=k_E(\cdot,0)\in \Hcal_E$, we conclude from \cite[Theorem 5.16]{pau_rag_16} that $\Hcal_G\subset \Hcal$. This proves the lemma.

\section{Finite-dimensional target space}\label{secdimL2n}

We discuss the case where the target space $L^2_\Q$ from Section~\ref{secapprox} is finite-dimensional. This is of independent interest and provides the basis for computing the sample estimator without sorting.

Assume that $\Q=\frac{1}{n}\sum_{i=1}^n  \delta_{x_i}$, where $\delta_x$ denotes the Dirac point measure at $x$, for a sample of (not necessarily distinct) points $x_1,\dots,x_n\in E$, for some $n\in\N$. Then property~\eqref{ass0} holds, for any measurable kernel $k:E\times E\to\R$.

Note that $\bar n=\dim L^2_\Q\le n$, with equality if and only if $x_i\neq x_j$ for all $i\neq j$.  We discuss this in more detail now. Let $\bar x_1,\dots,\bar x_{\bar n}$ be the distinct points in $E$ such that $\{\bar x_1,\dots,\bar x_{\bar n}\}=\{x_1,\dots,x_n\}$. Define the index sets $I_j =\{ i\mid x_i=\bar x_j\}$, $j=1,\dots,\bar n$, so that
 \begin{equation}\label{eqvalt}
   \Q = \frac{1}{n}\sum_{j=1}^{\bar n}  |I_j| \delta_{\bar x_j}.
 \end{equation}
Then \eqref{eqJastgnew} reads $J^\ast g  = \frac{1}{n} \sum_{j=1}^{\bar n}   k(\cdot,\bar x_j) |I_j| g(\bar x_j) $, so that
\begin{equation}\label{eqJxJxa}
 JJ^\ast g (\bar x_i) =  \frac{1}{n} \sum_{j=1}^{\bar n}  k(\bar x_i,\bar x_j) |I_j| g(\bar x_j),\quad i=1,\dots,\bar n, \quad g\in L^2_\Q.
\end{equation}

We denote by $V_n$ the space $\R^n$ endowed with the scaled Euclidean scalar product $\langle y,z\rangle_n =\frac{1}{n} y^\top z$. We define the linear operator $S:\Hcal\to V_n$ by
\begin{equation}\label{Sdef}
  Sh = (h(x_1),\dots,h(x_n))^\top,\quad h\in\Hcal.
\end{equation}
Its adjoint is given by  $S^\ast y = \frac{1}{n}\sum_{j=1}^n k(\cdot,x_j) y_j$, so that
\begin{equation}\label{eqSSast}
  (SS^\ast y)_i = \frac{1}{n}\sum_{j=1}^n k(x_i,x_j)  y_j,\quad i=1,\dots,n,\quad y\in V_n.
\end{equation}
We define the linear operator $P:V_n\to L^2_\Q$ by $P y(\bar x_j)=\frac{1}{|I_j|}\sum_{i\in I_j} y_i$, $j=1,\dots,\bar n$, $y\in V_n$. Combining this with \eqref{eqvalt} we obtain $\langle P y,g\rangle_\Q = \frac{1}{n}\sum_{j=1}^{\bar n} |I_j| P y(\bar x_j) g(\bar x_j)=\frac{1}{n}\sum_{i=1}^n y_i g(x_i)$, for any $g\in L^2_\Q$. It follows that the adjoint of $P$ is given by $P^\ast g = (g(x_1),\dots,g(x_n))^\top$. In view of \eqref{Sdef}, we see that
\begin{equation}\label{ImaSsImaPa}
  \Ima S\subseteq \Ima P^\ast,
\end{equation}
and $PP^\ast$ equals the identity operator on $L^2_\Q$,
\begin{equation}\label{eqidPPa}
   P P^\ast g  = g,\quad g\in L^2_\Q.
\end{equation}
We claim that $J=PS$, that is, the following diagram commutes:
\begin{equation}\label{eqJPS}
 \begin{tikzcd}
 \& V_n \arrow[d,"P"] \\
  \Hcal \arrow[r,"J"]\arrow[ur,"S"] \& L^2_{\Q}
\end{tikzcd}
\end{equation}
Indeed, for any $h\in\Hcal$, we have $PSh (\bar x_j) = \frac{1}{|I_j|} \sum_{i\in I_j} h(x_i) = h(\bar x_j)$, which proves \eqref{eqJPS}.

Combining \eqref{ImaSsImaPa}--\eqref{eqJPS}, we obtain
\begin{equation}\label{eqkerJkerSnew}
  \ker J = \ker S
\end{equation}
and $P^\ast(J J^\ast+\lambda) = (S S^\ast+\lambda)P^\ast$. This is a useful result for computing the sample estimators below. Indeed, as $\lambda>0$ , we have that $g_\lambda$ in \eqref{eqKRRalt} is uniquely determined by the lifted equation
\begin{equation}\label{eqliftednn}
  (SS^\ast+\lambda) P^\ast g_\lambda =  P^\ast  f.
\end{equation}
In order to compute $f_\lambda=J^\ast g_\lambda = S^\ast P^\ast g_\lambda$, we can thus solve the $n\times n$-dimensional linear problem \eqref{eqliftednn}, with $P^\ast f\in V_n$ given, instead of the corresponding $\bar n\times \bar n$-dimensional linear problem \eqref{eqKRRalt}. This fact allows for faster implementation of the sample estimation, as the test of whether $\bar n<n$ for a given sample $x_1,\dots,x_n$ is not needed, see Lemma~\ref{lemcompnew} below.

\subsection{Computation without sorting}\label{sseccompwos}

As an application of the above, we now discuss how to compute the sample estimator in \eqref{eqFXdef} without sorting the sample $\bm X$. Thereto, we fix the orthogonal basis $\{e_1,\dots,e_{n}\}$ of $V_n$ given by $e_{i,j}= \delta_{ij}$, so that $\langle e_i,e_j\rangle_{n}=\frac{1}{n}\delta_{ij}$, for $1\le i,j\le n$. We denote by $\overline{\bm f}=(\widetilde f(X^{(1)}),\dots,\widetilde f(X^{(n)}))^\top$ and define the positive semidefinite $n\times n$-matrix $\overline{\bm K}$ by $\overline{\bm K}_{ij}=\widetilde k(X^{(i)},X^{(j)})$. From \eqref{eqSSast} we see that $\frac{1}{n}\overline{\bm K}$ is the matrix representation of $\widetilde S \widetilde S^\ast:V_n\to V_n$. Summarizing, we arrive at the following alternative to Lemma~\ref{lemcompnewX}.

\begin{lemma}\label{lemcompnew}
The unique solution $\overline{\bm g}\in\R^n$ to
\begin{equation}\label{KLS}
  \textstyle(\frac{1}{n}\overline{\bm K}+\lambda) \overline{\bm g} = \overline{\bm f},
\end{equation}
gives $ f_{\bm X}=\frac{1}{n}\sum_{i=1}^n  k(\cdot,X^{(i)})\frac{\overline{\bm g}_i}{\sqrt{w( X^{(i)})}}$. Moreover, the solutions of \eqref{KLS'} and \eqref{KLS} are related by $\overline{\bm g}_i=|I_j|^{-1/2}{\bm g}_j$ for all $i\in I_j$, $j=1,\dots,\bar n$.
\end{lemma}

\begin{remark}
If $X^{(i)}\neq X^{(j)}$ for all $i\neq j$ (that is, if $\bar n=n$), then $\overline{\bm K}=\bm K$, $\overline{\bm f}=\bm f$, and Lemmas~\ref{lemcompnewX} and \ref{lemcompnew} coincide. Otherwise they provide different computational schemes.
\end{remark}

\section{Finite-dimensional RKHS}\label{secdimH}

We discuss the case where the RKHS $\Hcal$ from Section~\ref{secapprox} is finite-dimensional in more detail. In particular, we then extend some of our results to the case without regularization, $\lambda=0$.

Let $\{\phi_1,\dots,\phi_m\}$ be a set of linearly independent measurable functions on $E$ with $\|\phi_i\|_{2,\Q}<\infty$, $i=1,\dots,m$, for some $m\in\N$. Denote the \emph{feature map} $\phi=(\phi_1,\dots,\phi_m)^\top:E\to\R^m$ and define the measurable kernel $k:E\times E\to\R$ by $k(x,y)=\phi(x)^\top\phi(y)$. It follows by inspection that \eqref{ass0} holds and $\{\phi_1,\dots,\phi_m\}$ is an ONB of $\Hcal$, which is in line with the Lemma~\ref{thmmercernew}\ref{thmmercernew1}. Hence any function $h\in\Hcal$ can be represented by the coordinate vector $\bm h=\langle h,\phi\rangle_\Hcal\in\R^m$, $h=\phi^\top \bm h$. The operator $J^\ast:L^2_\Q\to\Hcal$ is of the form $J^\ast g = \phi^\top\langle\phi,g\rangle_\Q$. Hence $J^\ast J:\Hcal\to\Hcal$ satisfies $J^\ast J \phi^\top = \phi^\top \langle \phi,\phi^\top\rangle_\Q$, and can thus be represented by the $m\times m$-Gram matrix $\langle \phi,\phi^\top\rangle_\Q$. That is, $J^\ast J h=J^\ast J\phi^\top\bm h = \phi^\top \langle \phi,\phi^\top\rangle_\Q \bm h$, for $h\in\Hcal$.

We henceforth assume that $\ker J=\{0\}$, so that $J^\ast J:\Hcal\to\Hcal$ is invertible, by Section~\ref{ssecHS}. This is equivalent to $\{ J\phi_1,\dots,J\phi_m\}$ being a linearly independent set in $L^2_\Q$. We transform it into an ONS. Consider the spectral decomposition $\langle \phi,\phi^\top\rangle_\Q = S D S^\top$ with orthogonal matrix $S$ and diagonal matrix $D$ with $D_{ii}>0$. Define the functions $\psi_i\in\Hcal$ by $\psi^\top =(\psi_1 ,\dots,\psi_m ) =  \phi^\top S D^{-1/2} $ . Then $\langle  \psi,\psi^\top\rangle_\Q =D^{-1/2} S^\top \langle \phi,\phi^\top\rangle_\Q  S D^{-1/2} = I_m$, so that $\{J\psi_1,\dots,J\psi_m\}$ is an ONS in $L^2_\Q$. Moreover, we have $J^\ast J\psi^\top = J^\ast J\phi^\top S D^{-1/2}  =   \phi^\top \langle \phi,\phi^\top\rangle_\Q S D^{-1/2}  = \psi^\top D$,  so that $v_i=J\psi_i$ are the eigenvectors of $JJ^\ast$ with eigenvalues
\begin{equation}\label{evJfindim}
  {\mu}_i=D_{ii}>0 ,\quad i=1,\dots,m,
\end{equation}
and the spectral decomposition~\eqref{SRJJa} holds with index set $I=\{1,\dots,m\}$. The corresponding ONB of $\Hcal$ in the spectral decomposition~\eqref{SRJaJ} is given by $(u_1,\dots,u_m) =  J^\ast J\psi^\top D^{-1/2}  = \psi^\top D^{1/2} =\phi^\top S$. Note that we can express the kernel directly in terms of the rotated feature map $u$, $k(x,y)= u(x)^\top u(y)$, in line with Lemma~\ref{thmmercernew}\ref{thmmercernew1}.

\subsection{Approximation without regularization}

As $J^\ast J:\Hcal\to\Hcal$ is invertible, it follows that problem \eqref{KRR} always has a unique solution for $\lambda=0$, which obviously coincides with the projection $f_0 = (J^\ast J)^{-1} J^\ast f$.

\subsection{Sample estimation without regularization}

As in Section~\ref{secFSE}, we let $n\in\N$ and $\bm X=(X^{(1)},\dots,X^{(n)})$ be a sample of i.i.d.\ $E$-valued random variables with $X^{(i)}\sim\widetilde\Q$. We henceforth assume that $\lambda=0$, and hence we have to address the case where $\widetilde J_{\bm X}^\ast \widetilde J_{\bm X}$ is not invertible on $\widetilde\Hcal$. In this case, we shall denote by ``$(\widetilde J_{\bm X}^\ast \widetilde J_{\bm X})^{-1}$'' any linear operator on $\widetilde\Hcal$ that coincides with the inverse of $\widetilde J_{\bm X}^\ast \widetilde J_{\bm X}$ restricted to $\Ima \widetilde J_{\bm X}^\ast\subset\widetilde\Hcal$. As a consequence, $\widetilde f_{\bm X} =(\widetilde J_{\bm X}^\ast \widetilde J_{\bm X})^{-1} \widetilde J_{\bm X}^\ast f$ is always well defined and solves problem \eqref{KRR} with $\lambda=0$ and $\Q$ replaced by $\widetilde\Q_{\bm X}$.

We first show that our limit theorems carry over. The proof is given in Section~\ref{secproofthmLT0}.
\begin{theorem}\label{thmLT0}
  Theorem~\ref{thmLT} literally applies for $\lambda=0$, and so does Remark \ref{remCLT} (but not Remark~\ref{remthmLT}).
\end{theorem}

We denote by $\underline\mu=\min_{i\in I}\mu_i >0$ the minimal eigenvalue of $J^\ast J$, see \eqref{evJfindim}. The finite sample guarantee in Theorem~\ref{thmcinew} is modified as follows. The proof is given in Section~\ref{secproofthmcinew0}.

\begin{theorem}\label{thmcinew0}
For any $\eta\in (0,1]$, we have
\begin{equation}\label{eqCInew0}
  \|f_{\bm X} - f_0\|_\Hcal  <  \frac{2\sqrt{2 \log(4/\eta)}\|(1/w)(f-f_0)\kappa\|_{\infty,\Q}}{(1-C(\eta)/\sqrt{n})\underline\mu\sqrt{n}}
\end{equation}
with sampling probability $\bm Q$ of at least $1-\eta$, where $C(\eta)=  2\sqrt{\log(4/\eta)} \underline\mu^{-1} \|\widetilde\kappa\|^2_{\infty, \Q} $, for all $n>C(\eta)^2$.
\end{theorem}

Theorem~\ref{thmcinew0} is similar to \cite[Theorem 2.1(iii)]{coh_mig_17}, but in contrast extends to unbounded $f$ under assumptions \eqref{newasstildef} and \eqref{newasstildek}, and provides a learning rate $O((\frac{\log n}{n})^{1/2})$ for the sample error (set $\eta=n^{-r}$, for some $r>0$).

\subsection{Computation }

We now revisit Section~\ref{seccompnew} for the case of a finite-dimensional RKHS $\Hcal$. Note that $\widetilde\phi_j=\phi_j/\sqrt{w}$ form an ONB of $\widetilde\Hcal$. We define the $\bar n\times m$-matrix ${\bm V}$ by ${\bm V}_{ij} = |I_i|^{1/2}\widetilde\phi_j(\bar X^{(i)})$, so that ${\bm K}={\bm V} {\bm V}^\top$, which is given in Section~\ref{seccompnew}. Then ${\bm V}$ is the matrix representation of $\widetilde J_{\bm X}:\widetilde\Hcal\to L^2_{\widetilde\Q_{\bm X}}$, also called the \emph{design matrix}, and $\frac{1}{n}{\bm V}^\top$ is the matrix representation of $\widetilde J_{\bm X}^\ast:L^2_{\widetilde\Q_{\bm X}}\to\widetilde\Hcal$.\footnote{The matrix transpose ${\bm V}^\top$ is scaled by $\frac{1}{n}$ because the orthogonal basis $\{\psi_1,\dots,\psi_{\bar n}\}$ of $L^2_{\widetilde\Q_{\bm X}}$ is not normalized.}  Note that $k$ is tractable if and only if $\E_\Q[ \phi(X)\mid\Fcal_t]$ is given in closed form for all $t$. We arrive at the following result, which corresponds to Lemma~\ref{lemcompnewX} and which holds for any $\lambda\ge 0$. In case where $\lambda=0$, we assume that $\ker \widetilde J_{\bm X}=\{0\}$, so that $\widetilde J_{\bm X}^\ast \widetilde J_{\bm X}$ is invertible.

\begin{lemma}\label{lemcompnewFDnew}
The unique solution $ \bm h\in\R^m$ to
\begin{equation}\label{VtVLSnew}
 \textstyle(\frac{1}{n} {\bm V}^\top {\bm V} +\lambda)  \bm h = \frac{1}{n} {\bm V}^\top {\bm f},
\end{equation}
gives $ f_{\bm X} = \phi^\top \bm h$. The sample version of problem \eqref{KRR},
\begin{equation}\label{KRRempnew}
  \min_{\bm h\in\R^m} ( \frac{1}{n} \|{\bm V} \bm h - {\bm f}\|^2  + \lambda\|\bm h\|^2),
\end{equation}
has a unique solution $\bm h\in \R^m$, which coincides with the solution to \eqref{VtVLSnew}. If, moreover, the kernel $k$ is tractable then
\begin{equation}\label{hatVcf2}
V_{\bm X,t}=\E_\Q[ \phi(X)\mid\Fcal_t]^\top \bm h,\quad t=0,\dots,T,
\end{equation}
is given in closed form.
\end{lemma}

The least-squares problem~\eqref{KRRempnew} can be efficiently solved using stochastic gradient methods such as the randomized extended Kaczmarz algorithm in \cite{zou_fre_13, fil_gla_nak_sta_19}.

\subsection{Computation without sorting}

Following up on Section~\ref{sseccompwos}, we define the $n\times m$-matrix $\overline{\bm V}$ by $\overline{\bm V}_{ij} = \widetilde\phi_j(X^{(i)})$, so that $\overline{\bm K}=\overline{\bm V} \overline{\bm V}^\top$. Note that $\overline{\bm V}$ is the matrix representation of $\widetilde S:\widetilde\Hcal\to V_n$ in \eqref{Sdef}, and $\frac{1}{n}\overline{\bm V}^\top$ is the matrix representation of $\widetilde S^\ast:V_n\to\widetilde\Hcal$.\footnote{The matrix transpose $\overline{\bm V}^\top$ is scaled by $\frac{1}{n}$ because the orthogonal basis $\{e_1,\dots,e_{n}\}$ of $V_n$ is not normalized.}
From \eqref{eqkerJkerSnew} we thus infer that $\ker \overline{\bm V} = \ker \widetilde J_{\bm X}$. As a consequence, or by direct verification, we further obtain $\overline{\bm V}^\top \overline{\bm V}=\bm V^\top \bm V$, $\overline{\bm V}^\top \overline{\bm f} = \bm V^\top \bm f$, and $\|\overline{\bm V} \bm h - \overline{\bm f}\|  = \|\bm V \bm h - \bm f\|$. Summarizing, we thus infer that Lemma~\ref{lemcompnewFDnew} literally applies to $\overline{\bm V}$ and $\overline{\bm f}$ in lieu of $\bm V$ and $\bm f$.

\subsection{Proof of Theorem \ref{thmLT0}}\label{secproofthmLT0}

As in the proof of Theorem \ref{thmLT}, we assume for simplicity that the sampling measure $\widetilde\Q=\Q$, that is, $w=1$, so that we can omit the tildes.

We fix $\delta\in [0,1)$, and define the sampling event $\Scal_\delta =\{ \|J^\ast_{\bm X} J_{\bm X} - J^\ast J\|_2\le \delta / \| (J^\ast J)^{-1}\|\}\subseteq \bm E $. The following lemma collects some properties of $\Scal_\delta$.

\begin{lemma}\label{lemBdelta}
\begin{enumerate}
  \item\label{lemBdelta1} On $\Scal_\delta$, the operator $J_{\bm X}^\ast J_{\bm X} :\Hcal\to\Hcal$ is invertible and
\begin{equation}\label{Bdeltamux}
  \|(J^\ast_{\bm X }J_{\bm X })^{-1}\|
  \le \frac{\|(J^\ast J )^{-1}\|}{1-\delta}.
\end{equation}
\item\label{lemBdelta2} The sampling probability of $\Scal_\delta$ is bounded below by
 \begin{equation}\label{PaOdbound}
  {\bm Q}[ \Scal_\delta]\ge 1 -2\e^{-\frac{\delta^2   n}{4 \|\kappa\|_{\infty,\Q}^4 \|(J^\ast J )^{-1}\|^2}}.
\end{equation}
\end{enumerate}
\end{lemma}

\begin{proof}[Proof of Lemma \ref{lemBdelta}]
\ref{lemBdelta1}: We write $J^\ast_{\bm X} J_{\bm X}  = J^\ast J(J^\ast J )^{-1} J^\ast_{\bm X} J_{\bm X} $, so that $J^\ast_{\bm X}J_{\bm X} $ is invertible if and only if $(J^\ast J )^{-1}J^\ast_{\bm X} J_{\bm X} $ is invertible. If $\| (J^\ast J )^{-1}\|\| J^\ast J-J^\ast_{\bm X}J_{\bm X}\|_2\le  \delta$, then $\|1 - (J^\ast J  )^{-1}J^\ast_{\bm X}J_{\bm X} \| \le \delta$, which proves the invertibility of $(J^\ast J )^{-1}J^\ast_{\bm X}J_{\bm X} $, and hence of $J^\ast_{\bm X}J_{\bm X}$. Furthermore, using Neumann series of $1 - (J^\ast J)^{-1} J^\ast_{\bm X}J_{\bm X} $ we obtain \eqref{Bdeltamux}.

\ref{lemBdelta2}: We decompose $J^\ast_{\bm X}J_{\bm X} - J^\ast J$ as in \eqref{decJaJJaJ}. From \eqref{decJaJJaJ2} we infer that $\|\Xi_i\| \le \sqrt{2} \|\kappa\|^2_{\infty, \Q} <\infty$. Consequently, the Hoeffding inequality \eqref{hoeffdingeq} applies and we obtain
\begin{equation}\label{ci_A}
\bm Q[\|J^\ast_{\bm X}J_{\bm X}-J^\ast J\|_2 \ge   \tau]  \le 2 \e^{-\frac{\tau^2 n}{4\|\kappa \|^4_{\infty, \Q}}},
\end{equation}
which again is equivalent to \eqref{PaOdbound}.
\end{proof}

In view of Lemma~\ref{lemBdelta}\ref{lemBdelta1}, it now follows by inspection that \eqref{trick} and \eqref{keyeq} hold on $\Scal_\delta$ for $\lambda=0$. We thus obtain the global identity
\begin{equation}\label{keyeql0}
f_{\bm X} - f_0  = \Delta_{\bm X} + (J^\ast_{\bm X}J_{\bm X})^{-1} \frac{1}{n} \sum_{i=1}^n \xi_i,
\end{equation}
where the $\Hcal$-valued random variable $\Delta_{\bm X} = f_{\bm X} - f_0 - (J^\ast_{\bm X}J_{\bm X})^{-1} \frac{1}{n} \sum_{i=1}^n \xi_i$ satisfies $\Delta_{\bm X}=0$ on $\Scal_\delta$. In view of \eqref{PaOdbound} and the Borel--Cantelli lemma, we thus have $\sqrt{n}\Delta_{\bm X} \xrightarrow{a.s.} 0$, as $n \to \infty$.

Note that \eqref{normxi2}--\eqref{cov_xi} clearly hold with $\lambda=0$. Theorem~\ref{thmLT0} now follows as in the proof of Theorem \ref{thmLT} with $\lambda=0$, with \eqref{keyeq} replaced by \eqref{keyeql0}, and with Lemma~\ref{lemM1>0} replaced by the following lemma.

\begin{lemma}\label{lemM1}
We have $(J^\ast_{\bm X }J_{\bm X })^{-1} \xrightarrow{a.s.} (J^\ast J)^{-1}$, as $n \to \infty$.
\end{lemma}

\begin{proof}[Proof of Lemma~\ref{lemM1}]
Let $\tau>0$. We have
\begin{equation}\label{lemM1eq1}
    \bm Q[\|(J^\ast_{\bm X}J_{\bm X} )^{-1} - (J^\ast J )^{-1}\| \ge \tau] = \bm Q[\|(J^\ast_{\bm X}J_{\bm X} )^{-1} - (J^\ast J )^{-1}\| \ge \tau, \Scal_\delta]  + \bm Q[ \bm E \setminus \Scal_\delta].
\end{equation}
Using \eqref{trick} and \eqref{Bdeltamux}, we obtain on $\Scal_\delta$,
\[    \|(J^\ast_{\bm X}J_{\bm X} )^{-1} - (J^\ast J )^{-1}\| \le \frac{\|(J^\ast J )^{-1}\|^2}{1-\delta}\|J^\ast_{\bm X}J_{\bm X} -J^\ast J \|_2.\]
Combining this with \eqref{ci_A}, we obtain
\[  \bm Q[\|(J^\ast_{\bm X}J_{\bm X} )^{-1} - (J^\ast J )^{-1}\| \ge \tau, \Scal_\delta] \le \bm Q\left[\frac{\|(J^\ast J )^{-1}\|^2}{1-\delta}\|J^\ast_{\bm X}J_{\bm X}-J^\ast J\|_2 \ge \tau\right]  \le 2\e^{ \frac{-\tau^2(1-\delta)^2 n}{4 \|\kappa\|_{\infty, \Q}^4 \|(J^\ast J )^{-1}\|^4}}.\]
Combining this with \eqref{PaOdbound} and \eqref{lemM1eq1}, we infer that
\[     \bm Q[\|(J^\ast_{\bm X}J_{\bm X} )^{-1} - (J^\ast J )^{-1}\| \ge \tau] \le 2   \e^{ \frac{-\tau^2(1-\delta)^2 n}{4 \|\kappa\|_{\infty, \Q}^4 \|(J^\ast J )^{-1}\|^4}}  + 2   \e^{\frac{-\delta^2   n}{4 \|\kappa\|_{\infty,\Q}^4 \|(J^\ast J  )^{-1}\|^2}}.\]
As the right-hand side is summable over $n\ge 1$ for any $\tau >0$, the lemma follows from the Borel--Cantelli lemma.
\end{proof}

\subsection{Proof of Theorem \ref{thmcinew0}}\label{secproofthmcinew0}

As in the proof of Theorem~\ref{thmcinew}, we assume that the sampling measure $\widetilde\Q=\Q$, that is, $w=1$. The extension to the general case is straightforward, using \eqref{diagHtilde} and \eqref{eqFXdef}.

We let the sampling event $\Scal_\delta$ be as in Lemma~\ref{lemBdelta}, and let $\tau>0$. We have
\begin{equation}\label{JaJlaminvnormeq1}
    {\bm Q} [ \| f_{\bm X} - f_0\|_\Hcal  \ge  \tau ]  \le  {\bm Q} [ \| f_{\bm X} - f_0\|_\Hcal  \ge  \tau , \Scal_\delta] + {\bm Q} [\bm E\setminus\Scal_\delta].
\end{equation}
Using \eqref{keyeq} and \eqref{Bdeltamux}, we obtain on $\Scal_\delta$,
\[ \| f_{\bm X} - f_0\|_\Hcal \le \frac{\|(J^\ast J )^{-1}\|}{1-\delta}\|\frac{1}{n}\sum_{i=1}^n \xi_i\|_{\Hcal}.\]
Combining this with \eqref{ci_for_xi}, we obtain
\[ {\bm Q} [ \| f_{\bm X} - f_0\|_\Hcal  \ge  \tau , \Scal_\delta] \le \bm Q\left[\frac{\|(J^\ast J )^{-1}\|}{1-\delta}\left\|\frac{1}{n}\sum_{i=1}^n \xi_i\right\|_{\Hcal}\ge  \tau\right]  \le 2\e^{-\frac{\tau^2 (1-\delta)^2 n}{8 \|(f-f_0)\kappa\|^2_{\infty, \Q}\|(J^\ast J )^{-1}\|^2}}.\]
Combining this with \eqref{PaOdbound} and \eqref{JaJlaminvnormeq1}, we infer that
\[     {\bm Q} [ \| f_{\bm X} - f_0\|_\Hcal  \ge  \tau ] \le 2\e^{-\frac{\tau^2 (1-\delta)^2 n}{8 \|(f-f_0)\kappa\|^2_{\infty, \Q}\|(J^\ast J )^{-1}\|^2}} + 2   \e^{\frac{-\delta^2   n}{4 \|\kappa\|_{\infty,\Q}^4 \|(J^\ast J  )^{-1}\|^2}}.\]

Now we choose $\delta = \frac{\|\kappa\|_{\infty, \Q}^2 \tau }{\sqrt{2} \|(f-f_0)\kappa\|_{\infty, \Q}+\|\kappa\|_{\infty, \Q}^2 \tau}$, so that the two exponents on the right hand side match. Therefore, we obtain
\[ {\bm Q} [ \| f_{\bm X} - f_0\|_\Hcal  \ge  \tau ] \le 4\e^{-\frac{\delta^2 n}{4\|\kappa\|^4_{\infty, \Q}\|(J^\ast J )^{-1}\|^2}} = 4\e^{-\frac{\tau^2n}{4\|(J^\ast J )^{-1}\|^2(\sqrt{2}\|(f-f_0)\kappa\|_{\infty, \Q} + \|\kappa\|_{\infty,\Q}^2 \tau)^2}}.\]
Straightforward rewriting gives \eqref{eqCInew0}, where we use the fact that $\| (J^\ast J)^{-1}\|= \underline\mu^{-1}$, see \eqref{SRJaJ}.

\section{Comparison with regress-now}\label{appregnow}
The method we developed in this paper gives an estimation of the entire value process $V$. In practice one could be interested in the estimation of the portfolio value $V_t$ only at some fixed time $t$, e.g., $t=1$. In \cite{gla_yu_04} two least squares Monte Carlo approaches are described to deal with this problem in the context of American options pricing. The first approach, called ``regress-later'', consists in approximating the payoff function $f$ by means of a projection on a finite number of basis functions. The basis functions are chosen in a way that their conditional expectation at time $t=1$ is in closed form. Our method can be seen as a double extension of this approach, because it also covers the case where the number of basis functions could potentially be infinite, and gives closed form estimation of the portfolio value $V_t$ at any time $t$. The second approach, called ``regress-now'', consists in approximating $V_1$ by means of a projection on a finite number of basis functions that depend solely on the variable of interest, $x_1 \in E_1$.

We compare our approach, which corresponds to ``regress-later'' and which gives the estimator $V_{\bm X, 1}$ in \eqref{hatVcf1}, to its regress-now variant, whose estimator we denote by $V_{\bm X, 1}^{\text{now}}$. Thereto we briefly discuss how to construct $V_{\bm X, 1}^{\text{now}}$, and implement it, in the context of the three examples studied in Section \ref{secexamples}.

To construct $V_{\bm X, 1}^{\text{now}}$ only a few changes need to be carried out to the previous construction of $V_{\bm X, 1}$. We sample directly from $\Q$ which gives $\bm X = (X^{(1)}, \cdots, X^{(n)})$ and the vector $\bm f = (f(X^{(1)}), \cdots, f(X^{(n)}))^{\top}$. The expression of $V_{\bm X, 1}^{\text{now}}$ is given by \eqref{hatVcf1} for $t=1$ where the kernel $k$ is of the form \eqref{kernel} with $T$ replaced by $1$ so that its domain is $E_1\times E_1$. Instead of using the whole sample $\bm X$ to construct the matrix $\bm K$ in \eqref{KLS'} only the $(t=1)$-cross-section $\bm X_1 = (X^{(1)}_1,\cdots, X^{(n)}_1)$ is needed.\footnote{We omit $X_0$, as mentioned in footnote \ref{footnote_X0}.} Since the sampling measure $\Q$ is Gaussian, property \eqref{XiXjneq} holds for the sample $\bm X_1$ so that $\bar n=n$, $\bar X_1^{(j)}=X_1^{(j)}$ and $|I_j|=1$ for all $j=1,\dots,n$. The conditional kernel embeddings in \eqref{hatVcf1} boil down to $M_1(  X^{(j)})=k( X_1, X^{(j)}_1)$.

Table \ref{table_vnow_vlat} shows the normalized $L^2_{\Q}$-errors of $V_{\bm X, 1}^{\text{now}}$ and compares them to the respective values of $V_{\bm X, 1}$, which are taken from Table \ref{table1new}. We observe that for all three examples, our regress-later estimator performs better than the regress-now estimator. This finding is confirmed by Figure \ref{fig_regnow}, which corresponds to Figures~\ref{fig_minput}, \ref{fig_maxcall_highd} and \ref{fig_barrier}.

\begin{table}[h]
\centering
  \begin{tabular}{|l|c|c|}
\hline
Payoff &      regress-now &         regress-later  \\
\hline\hline
Min-put & 1.946& \bf{1.827}\\
\hline
Max-call &   2.606 & \bf{2.500}\\
\hline
Barrier reverse convertible   & 0.2806 & \bf{0.2506}\\
\hline
\end{tabular}
\caption{Normalized $L^2_\Q$-error $\|V_1- \widehat{V}_1\|_{2,\Q}/V_0$ in \% for $\widehat{V}_1 = V_{\bm X, 1}^{\text{now}}, V_{\bm X, 1}$ with $\gamma = 0$.}\label{table_vnow_vlat}
\end{table}

\begin{figure}[p]
    \centering 
\begin{subfigure}{0.45\textwidth}
  \includegraphics[width=\linewidth]{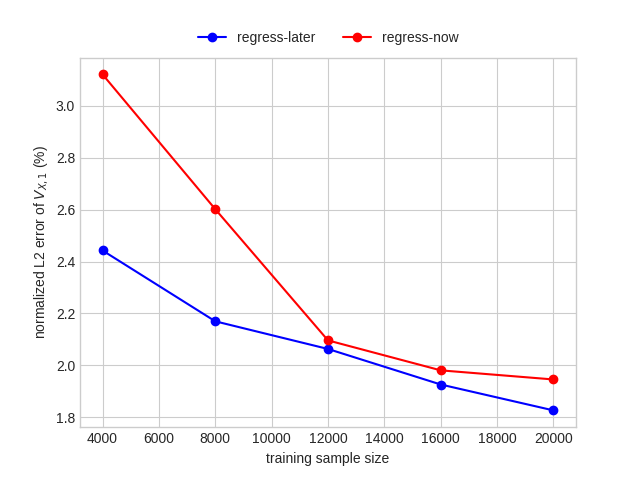}
  \caption{Min-put: normalized $L^2_{\Q}$-errors of $V_{\bm X, 1}$ and $V_{\bm X, 1}^{\text{now}}$ in \%}
  \label{error_regnow_minput}
\end{subfigure}\hfil 
\begin{subfigure}{0.45\textwidth}
  \includegraphics[width=\linewidth]{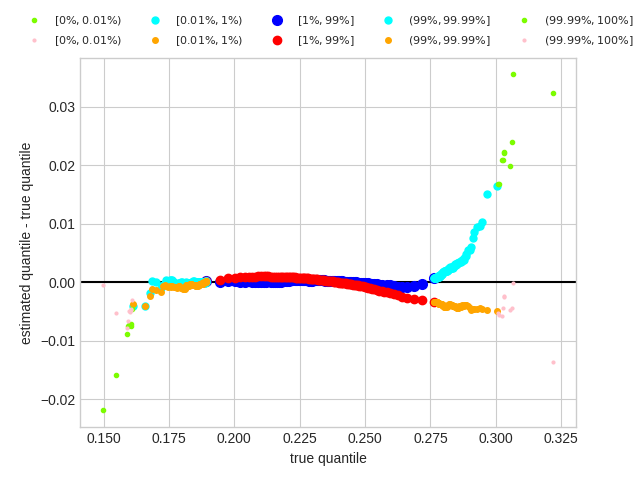}
  \caption{Min-put: detrended Q-Q plots of $V_{\bm X, 1}$ and $V_{\bm X, 1}^{\text{now}}$}
  \label{qqplot_regnow_minput}
\end{subfigure}
\medskip
\begin{subfigure}{0.45\textwidth}
  \includegraphics[width=\linewidth]{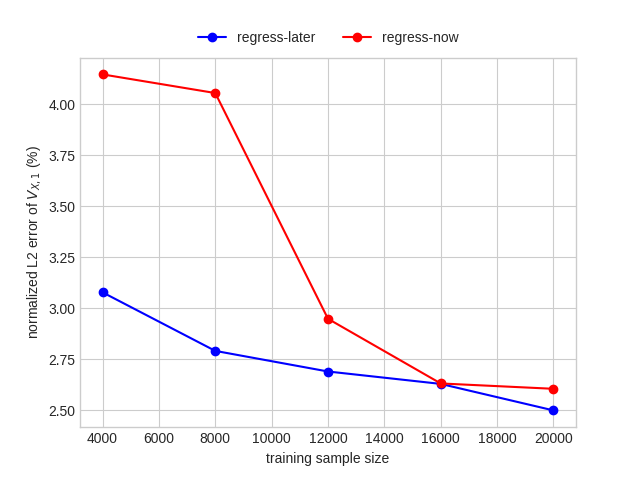}
  \caption{Max-call: normalized $L^2_{\Q}$-errors of $V_{\bm X, 1}$ and $V_{\bm X, 1}^{\text{now}}$ in \%}
  \label{error_regnow_maxcall}
\end{subfigure}\hfil 
\begin{subfigure}{0.45\textwidth}
  \includegraphics[width=\linewidth]{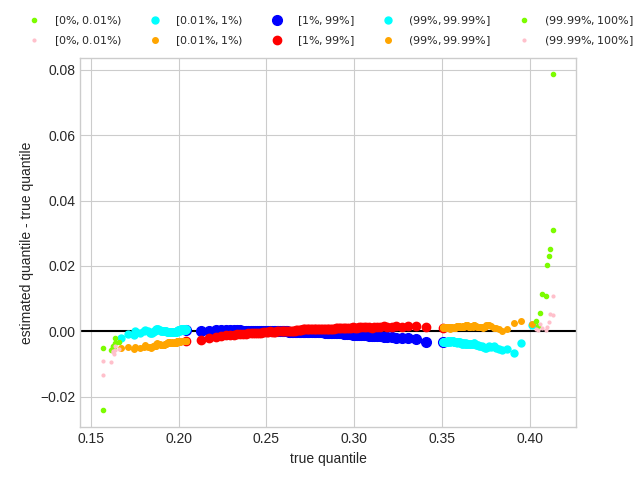}
  \caption{Max-call: detrended Q-Q plots of $V_{\bm X, 1}$ and $V_{\bm X, 1}^{\text{now}}$}
  \label{qqplot_regnow_maxcall}
\end{subfigure}
\medskip
\begin{subfigure}{0.45\textwidth}
  \includegraphics[width=\linewidth]{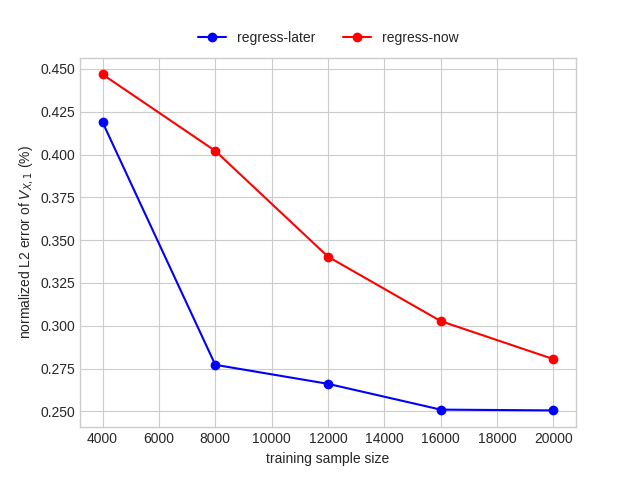}
  \caption{Barrier reverse convertible: normalized $L^2_{\Q}$-errors of $V_{\bm X, 1}$ and $V_{\bm X, 1}^{\text{now}}$ in \%}
  \label{error_regnow_barrier}
\end{subfigure}\hfil 
\begin{subfigure}{0.45\textwidth}
  \includegraphics[width=\linewidth]{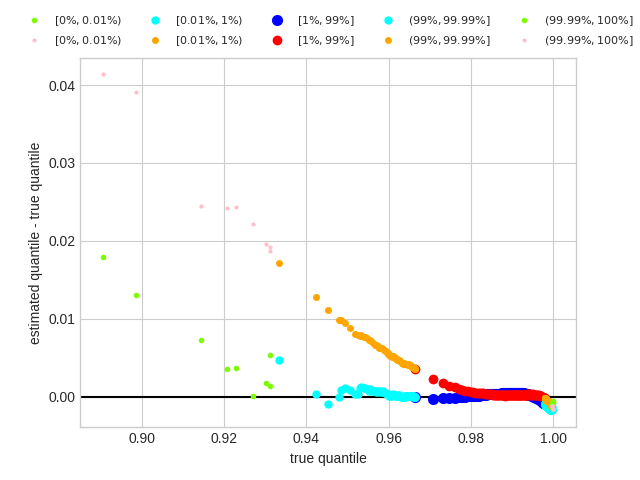}
  \caption{Barrier reverse convertible: detrended Q-Q plots of $V_{\bm X, 1}$ and $V_{\bm X, 1}^{\text{now}}$}
  \label{qqplot_regnow_barrier}
\end{subfigure}
\caption{Comparison of $V_{\bm X, 1}$ and $V_{\bm X, 1}^{\text{now}}$ with $\gamma = 0$. In the detrended Q-Q plots, the blue, cyan, and lawngreen (red, orange, and pink) dots are built using the regress-later (regress-now) estimator and the test data. $[0\%, 0.01\%)$ refer to the quantiles of levels $\{0.001\%,0.002\%, \cdots, 0.009\%\}$, $[0.01\%, 1\%)$ refer to the quantiles of levels $\{0.01\%,0.02\%, \cdots, 0.99\%\}$, $[1\%, 99\%]$ refer to the quantiles of levels $\{1\%,2\%, \cdots, 99\%\}$, $(99\%, 99.99\%]$ refer to the quantiles of levels $\{99.01\%,99.02\%, \cdots, 99.99\%\}$, and $(99.99\%, 100\%]$ refer to the quantiles of levels $\{99.991\%,99.992\%, \cdots, 100\%\}$. }
\label{fig_regnow}
\end{figure}

Table \ref{table_CP} shows the computing times for the full estimation of $V_{\bm X, 1}^{\text{now}}$ and $V_{\bm X, 1}$. Computation is performed on Skylake processors running at 2.3 GHz, using 14 cores and 100 GB of RAM. We see that our estimator requires less computing time than the regress-now estimator. Indeed, note that the dimension of the regression problem, which equals the training sample size $n=20{,}000$, is the same for both methods.

\begin{table}[h]
\centering
  \begin{tabular}{|l|c|c|}
  \hline
Payoff &      regress-now &         regress-later  \\
\hline\hline
Min-put & 3122 & \bf{2756} \\
\hline
Max-call &  3288  & \bf{2570} \\
\hline
Barrier reverse convertible   & 3683  & \bf{2756} \\
\hline
\end{tabular}
\caption{Computing times (in seconds) for estimating $V_{\bm X, 1}^{\text{now}}$ and $V_{\bm X, 1}$ with $\gamma = 0$, for sample size $n=20{,}000$.}\label{table_CP}
\end{table}

Thus, both in terms of accuracy in the estimation of $V_1$, measured by normalized $L^2_\Q$-error, and computing time, our regress-later estimator outperforms the regress-now estimator, and thus is better suited for portfolio valuation tasks. This might look surprising since the estimation of the entire value process $V$ by regress-later is a high dimension problem (path space dimension 12 for the min-put and max-call, and 36 for the barrier reverse convertible), whereas the direct estimation of $V_1$ by regress-now is a problem of much smaller dimension (state space dimension 6 for the min-put and max-call, and 3 for the barrier reverse convertible). A reason for the inferior performance of regress-now could be that the training data $\bm f$ represents noisy observations of the true values $ V_1(X^{(1)}_1), \cdots, V_1(X^{(n)}_1)$, which we cannot directly observe. This is in contrast to our regress-later approach, where $\bm f$ are the true values of the target function $f$. Our findings suggest that for portfolio valuation, it is more efficient to first estimate the payoff function $f$ on a high-dimensional domain and with non-noisy observations, rather than directly estimate the time $t=1$-value function $V_1$ on a low-dimensional domain but with noisy observations.

How about the risk measures? Tables \ref{table_var_nowVSlater} and \ref{table_es_nowVSlater} show normalized value at risk and expected shortfall of $\mathrm{L}_{\bm X} = \widehat{V}_{0}-\widehat{V}_{1}$  and $-\mathrm{L}_{\bm X}$, where $\widehat{V}_t$ stands for either $V_{\bm X, t}$ or $V_{\bm X, t}^{\text{now}}$, for $t=0, 1$. We observe that regress-later outperforms regress-now in all risk measure estimates of long positions, whereas for risk measure estimates of short positions regress-now is best in 4 cases out of 6.\footnote{These observations are in line with the comparison of the detrended Q-Q plots in Figure \ref{fig_regnow}. In fact, Figure \ref{qqplot_regnow_minput} stipulates that the estimation of the right tail of $V_1$ is better with regress-now than with regress-later, which seems in contradiction to the risk measurements of $-\mathrm{L}_{\bm X}$. However, since these risk measures act not only on $\widehat{V}_{1}$, but also depend on $\widehat{V}_{0}$, there is no contradiction with the best risk measurements obtained with regress-later. In fact, if we computed the quantiles of $\widehat{V}_{1}-\widehat{V}_{0}$ and $V_1-V_0$ and plotted the corresponding detrended Q-Q plot, then we would obtain Figure \ref{qqplot_regnow_minput} with a horizontal shift of $-V_{0}$ and a vertical shift of $V_0-\widehat{V}_0$. Our computations suggest that for regress-now and regress-later the vertical shift is downward of same magnitude, which explains that regress-later performs better than regress-now in risk measurements.}
This mixed outcome is somewhat at odds with the above observed superiority of regress-later over regress-now. On the other hand, it serves as an illustration of the no free lunch theorem \cite{wol_mac_97}, which states that there exist no single best method for portfolio valuation and risk management that outperforms all other methods in all situations.

\begin{table}[h]
\centering
  \begin{tabular}{|l|r|r|r|r|}
\hline
Payoff &    $\mathrm{VaR}_{99.5\%}(\mathrm{L})$ & $\mathrm{VaR}_{99.5\%}(\mathrm{L}_{\bm X})$& $\mathrm{VaR}_{99.5\%}(-\mathrm{L})$ & $\mathrm{VaR}_{99.5\%}(-\mathrm{L}_{\bm X})$\\
\hline\hline
Min-put (regress-now) & 2063& 2098& 2058 & 1869\\
Min-put (regress-later) &  2063& \bf{2083}& 2058 & \bf{2123}\\
\hline
Max-call (regress-now) & 2800& 2950& 3071 & \bf{3109}\\
Max-call (regress-later) &  2800& \bf{2802}& 3071 & 2961\\
\hline
Barrier reverse convertible (regress-now) & 264.1& 219.7& 99.83 & \bf{92.50}\\
Barrier reverse convertible (regress-later) &264.1& \bf{264.6}& 99.83 & 85.94\\
\hline
\end{tabular}
\caption{Normalized true and estimated value at risk $\mathrm{VaR}_{99.5\%}(\mathrm{L})/V_0$, $\mathrm{VaR}_{99.5\%}(\mathrm{L}_{\bm X})/V_0$, $\mathrm{VaR}_{99.5\%}(-\mathrm{L})/V_0$, and $\mathrm{VaR}_{99.5\%}(-\mathrm{L}_{\bm X})/V_0$ with $\gamma = 0$. All values are expressed in basis points.}\label{table_var_nowVSlater}
\end{table}

\begin{table}[h]
\centering
  \begin{tabular}{|l|r|r|r|r|}
\hline
Payoff &    $\mathrm{ES}_{99\%}(\mathrm{L})$ & $\mathrm{ES}_{99\%}(\mathrm{L}_{\bm X})$& $\mathrm{ES}_{99\%}(-\mathrm{L})$ & $\mathrm{ES}_{99\%}(-\mathrm{L}_{\bm X})$\\
\hline\hline
Min-put (regress-now) & 2141& 2179& 2118 & 1937\\
Min-put (regress-later) &  2141& \bf{2168} & 2118 &\bf{2219}\\
\hline
Max-call (regress-now) & 2890& 3044& 3205 &\bf{3231}\\
Max-call (regress-later) &  2890& \bf{2880} & 3205 &3090\\
\hline
Barrier reverse convertible (regress-now) & 284.7& 231.8& 101.2 & \bf{92.83}\\
Barrier reverse convertible (regress-later) &  284.7& \bf{283.2} & 101.2 &86.63\\
\hline
\end{tabular}
\caption{Normalized true and estimated expected shortfall $\mathrm{ES}_{99\%}(\mathrm{L})/V_0$, $\mathrm{ES}_{99\%}(\mathrm{L}_{\bm X})/V_0$, $\mathrm{ES}_{99\%}(-\mathrm{L})/V_0$, and $\mathrm{ES}_{99\%}(-\mathrm{L}_{\bm X})/V_0$ with $\gamma = 0$. All values are expressed in basis points.}\label{table_es_nowVSlater}
\end{table}

We note that finite sample guarantees similar to that in Theorem \ref{thmcinew} can be derived for the regress-now estimator, however only under boundedness assumptions on $f$ and $\kappa$. In fact, the boundedness assumptions \eqref{newasstildef} and \eqref{newasstildek} are not enough because they would not guarantee the boundedness of the noise, $f(X)-V_1$. In the literature, boundedness assumption on $f$ is relaxed by assumptions on the noise, see, e.g., \cite{ras_sam_2017}.


\end{appendix}

\newpage

\bibliographystyle{alpha}
\bibliography{Bibliography}

\newcommand{\etalchar}[1]{$^{#1}$}
\begin{thebibliography}{DVRC{\etalchar{+}}05}

\bibitem[AB99]{ali_bor_99}
Charalambos~D. Aliprantis and Kim~C. Border.
\newblock {\em Infinite-dimensional analysis}.
\newblock Springer-Verlag, Berlin, second edition, 1999.
\newblock A hitchhiker's guide.

\bibitem[Aro50]{aro_50}
N.~Aronszajn.
\newblock Theory of reproducing kernels.
\newblock {\em Trans. Amer. Math. Soc.}, 68:337--404, 1950.

\bibitem[BCJ19]{bec_etal_2019}
Sebastian Becker, Patrick Cheridito, and Arnulf Jentzen.
\newblock Deep optimal stopping.
\newblock {\em Journal of Machine Learning Research}, 20(74):1--25, 2019.

\bibitem[BDM15]{bro_du_moa_15}
Mark Broadie, Yiping Du, and Ciamac~C. Moallemi.
\newblock Risk estimation via regression.
\newblock {\em Oper. Res.}, 63(5):1077--1097, 2015.

\bibitem[Ber22]{ber_1922}
Stefan Bergmann.
\newblock \"{U}ber die {E}ntwicklung der harmonischen {F}unktionen der {E}bene
  und des {R}aumes nach {O}rthogonalfunktionen.
\newblock {\em Math. Ann.}, 86(3-4):238--271, 1922.

\bibitem[BGV92]{bos_guy_vap_92}
Bernhard~E. Boser, Isabelle~M. Guyon, and Vladimir~N. Vapnik.
\newblock A training algorithm for optimal margin classifiers.
\newblock In {\em Proceedings of the Fifth Annual Workshop on Computational
  Learning Theory}, COLT '92, pages 144--152, New York, NY, USA, 1992. ACM.

\bibitem[Bis06]{bis_06}
Christopher~M. Bishop.
\newblock {\em Pattern recognition and machine learning}.
\newblock Information Science and Statistics. Springer, New York, 2006.

\bibitem[BPR07]{bau_et_al_2007}
Frank Bauer, Sergei Pereverzev, and Lorenzo Rosasco.
\newblock On regularization algorithms in learning theory.
\newblock {\em J. Complexity}, 23(1):52--72, 2007.

\bibitem[BTA04]{ber_tho_2004}
Alain Berlinet and Christine Thomas-Agnan.
\newblock {\em Reproducing Kernel Hilbert Space in Probability and Statistics}.
\newblock 01 2004.

\bibitem[CDV07]{cap_dev_2007}
A.~Caponnetto and E.~De~Vito.
\newblock Optimal rates for the regularized least-squares algorithm.
\newblock {\em Found. Comput. Math.}, 7(3):331--368, 2007.

\bibitem[CF17]{cam_fil_17}
Mathieu Cambou and Damir Filipovi\'{c}.
\newblock Model uncertainty and scenario aggregation.
\newblock {\em Math. Finance}, 27(2):534--567, 2017.

\bibitem[CF18]{cam_fil_18}
Mathieu Cambou and Damir Filipovi\'{c}.
\newblock Replicating portfolio approach to capital calculation.
\newblock {\em Finance Stoch.}, 22(1):181--203, 2018.

\bibitem[CM17]{coh_mig_17}
Albert Cohen and Giovanni Migliorati.
\newblock Optimal weighted least-squares methods.
\newblock {\em SMAI J. Comput. Math.}, 3:181--203, 2017.

\bibitem[CS02a]{cuc_sma_2002}
Felipe Cucker and Steve Smale.
\newblock Best choices for regularization parameters in learning theory: on the
  bias-variance problem.
\newblock {\em Found. Comput. Math.}, 2(4):413--428, 2002.

\bibitem[CS02b]{cuck_smale_2001}
Felipe Cucker and Steve Smale.
\newblock On the mathematical foundations of learning.
\newblock {\em Bull. Amer. Math. Soc. (N.S.)}, 39(1):1--49, 2002.

\bibitem[CZ07]{cuc_zho_07}
Felipe Cucker and Ding-Xuan Zhou.
\newblock {\em Learning theory: an approximation theory viewpoint}, volume~24
  of {\em Cambridge Monographs on Applied and Computational Mathematics}.
\newblock Cambridge University Press, Cambridge, 2007.
\newblock With a foreword by Stephen Smale.

\bibitem[DAV15]{dav_15}
DAV.
\newblock {Proxy-Modelle f\"ur die Risikokapitalberechnung}.
\newblock Technical report, Ausschuss Investment der Deutschen
  Aktuarvereinigung (DAV), 2015.

\bibitem[DFS03]{duf_fil_sch_03}
D.~Duffie, D.~Filipovi\'{c}, and W.~Schachermayer.
\newblock Affine processes and applications in finance.
\newblock {\em Ann. Appl. Probab.}, 13(3):984--1053, 2003.

\bibitem[DPZ14]{da_prato}
Giuseppe Da~Prato and Jerzy Zabczyk.
\newblock {\em Stochastic Equations in Infinite Dimensions}.
\newblock Encyclopedia of Mathematics and its Applications. Cambridge
  University Press, 2 edition, 2014.

\bibitem[DVRC{\etalchar{+}}05]{dev_ros_cap_05}
Ernesto De~Vito, Lorenzo Rosasco, Andrea Caponnetto, Umberto De~Giovannini, and
  Francesca Odone.
\newblock Learning from examples as an inverse problem.
\newblock {\em J. Mach. Learn. Res.}, 6:883--904, 2005.

\bibitem[DXH{\etalchar{+}}14]{dai_etal_14}
Bo~Dai, Bo~Xie, Niao He, Yingyu Liang, Anant Raj, Maria-Florina~F Balcan, and
  Le~Song.
\newblock Scalable kernel methods via doubly stochastic gradients.
\newblock In Z.~Ghahramani, M.~Welling, C.~Cortes, N.~D. Lawrence, and K.~Q.
  Weinberger, editors, {\em Advances in Neural Information Processing Systems
  27}, pages 3041--3049. Curran Associates, Inc., 2014.

\bibitem[EHN96]{eng_et_al_1996}
Heinz~W. Engl, Martin Hanke, and Andreas Neubauer.
\newblock {\em Regularization of inverse problems}, volume 375 of {\em
  Mathematics and its Applications}.
\newblock Kluwer Academic Publishers Group, Dordrecht, 1996.

\bibitem[FAF20]{fer_fil_2020}
Lucio Fernandez-Arjona and Damir Filipović.
\newblock A machine learning approach to portfolio pricing and risk management
  for high-dimensional problems.
\newblock 04 2020.

\bibitem[FGNS19]{fil_gla_nak_sta_19}
Damir Filipovi\'c, Kathrin Glau, Yuji Nakatsukasa, and Francesco Statti.
\newblock Combining function approximation and {M}onte {C}arlo simulation for
  efficient option pricing.
\newblock Working paper, 2019.

\bibitem[FS04]{foe_sch_04}
Hans F\"{o}llmer and Alexander Schied.
\newblock {\em Stochastic finance}, volume~27 of {\em De Gruyter Studies in
  Mathematics}.
\newblock Walter de Gruyter \& Co., Berlin, extended edition, 2004.
\newblock An introduction in discrete time.

\bibitem[GJ10]{gor_jun_10}
Michael~B. Gordy and Sandeep Juneja.
\newblock Nested simulation in portfolio risk measurement.
\newblock {\em Management Science}, 56(10):1833--1848, 2010.

\bibitem[GY04]{gla_yu_04}
Paul Glasserman and Bin Yu.
\newblock Simulation for {A}merican options: regression now or regression
  later?
\newblock In {\em Monte {C}arlo and quasi-{M}onte {C}arlo methods 2002}, pages
  213--226. Springer, Berlin, 2004.

\bibitem[HJP76]{hoffmann_pisier}
J.~Hoffmann-J{\o}rgensen and G.~Pisier.
\newblock The law of large numbers and the central limit theorem in {B}anach
  spaces.
\newblock {\em Ann. Probability}, 4(4):587--599, 1976.

\bibitem[HS19]{hug_sav_19}
Brian Huge and Antoine Savine.
\newblock Deep analytics: Risk management with {AI}.
\newblock 2019.
\newblock Global Derivatives, 2019.

\bibitem[HS20]{hug_sav_20}
Brian Huge and Antoine Savine.
\newblock Differential machine learning, 2020.

\bibitem[HSS08]{hof_sch_smo_08}
Thomas Hofmann, Bernhard Sch\"{o}lkopf, and Alexander~J. Smola.
\newblock Kernel methods in machine learning.
\newblock {\em Ann. Statist.}, 36(3):1171--1220, 2008.

\bibitem[Kat95]{kat_95}
Tosio Kato.
\newblock {\em Perturbation theory for linear operators}.
\newblock Classics in Mathematics. Springer-Verlag, Berlin, 1995.
\newblock Reprint of the 1980 edition.

\bibitem[LHW{\etalchar{+}}16]{lu_etal_16}
Jing Lu, Steven~C.H. Hoi, Jialei Wang, Peilin Zhao, and Zhi-Yong Liu.
\newblock Large scale online kernel learning.
\newblock {\em Journal of Machine Learning Research}, 17(47):1--43, 2016.

\bibitem[LL11]{lam_lap_2011}
D.~Lamberton and Bernard Lapeyre.
\newblock {\em Introduction to Stochastic Calculus Applied to Finance, Second
  Edition}.
\newblock 12 2011.

\bibitem[LRRC18]{lin_et_al_2018}
Junhong Lin, Alessandro Rudi, Lorenzo Rosasco, and Volkan Cevher.
\newblock Optimal rates for spectral algorithms with least-squares regression
  over {H}ilbert spaces.
\newblock {\em Applied and Computational Harmonic Analysis}, 2018.

\bibitem[Mer09]{mer_1909}
J.~Mercer.
\newblock Functions of positive and negative type, and their connection with
  the theory of integral equations.
\newblock {\em Philosophical Transactions of the Royal Society of London.
  Series A, Containing Papers of a Mathematical or Physical Character},
  209:415--446, 1909.

\bibitem[MFE15]{mcn_etal_15}
Alexander~J. McNeil, Rdiger Frey, and Paul Embrechts.
\newblock {\em Quantitative Risk Management: Concepts, Techniques and Tools}.
\newblock Princeton University Press, USA, 2015.

\bibitem[MV18]{mai_ver_18}
Julien Mairal and Jean-Philippe Vert.
\newblock Machine learning with kernel methods.
\newblock Lecture Notes, January 2018.

\bibitem[MXZ06]{mic_xu_zha_06}
Charles~A. Micchelli, Yuesheng Xu, and Haizhang Zhang.
\newblock Universal kernels.
\newblock {\em J. Mach. Learn. Res.}, 7:2651--2667, 2006.

\bibitem[NF16]{nis_fuk_16}
Yu~Nishiyama and Kenji Fukumizu.
\newblock Characteristic kernels and infinitely divisible distributions.
\newblock {\em Journal of Machine Learning Research}, 17(180):1--28, 2016.

\bibitem[NUWZ18]{nov_etal_18}
Erich Novak, Mario Ullrich, Henryk Woźniakowski, and Shun Zhang.
\newblock Reproducing kernels of {S}obolev spaces on {$R^d$} and applications
  to embedding constants and tractability.
\newblock {\em Analysis and Applications}, 16(05):693--715, 2018.

\bibitem[NW14]{nat_wer_14}
Jan Natolski and Ralf Werner.
\newblock Mathematical analysis of different approaches for replicating
  portfolios.
\newblock {\em Eur. Actuar. J.}, 4(2):411--435, 2014.

\bibitem[Pin94]{pin_94}
Iosif Pinelis.
\newblock Optimum bounds for the distributions of martingales in {B}anach
  spaces.
\newblock {\em Ann. Probab.}, 22(4):1679--1706, 1994.

\bibitem[PR16]{pau_rag_16}
Vern~I. Paulsen and Mrinal Raghupathi.
\newblock {\em An introduction to the theory of reproducing kernel {H}ilbert
  spaces}, volume 152 of {\em Cambridge Studies in Advanced Mathematics}.
\newblock Cambridge University Press, Cambridge, 2016.

\bibitem[PS16]{pel_sch_16}
Antoon Pelsser and Janina Schweizer.
\newblock The difference between {LSMC} and replicating portfolio in insurance
  liability modeling.
\newblock {\em Eur. Actuar. J.}, 6(2):441--494, 2016.

\bibitem[PVG{\etalchar{+}}11]{scikit_learn}
F.~Pedregosa, G.~Varoquaux, A.~Gramfort, V.~Michel, B.~Thirion, O.~Grisel,
  M.~Blondel, P.~Prettenhofer, R.~Weiss, V.~Dubourg, J.~Vanderplas, A.~Passos,
  D.~Cournapeau, M.~Brucher, M.~Perrot, and E.~Duchesnay.
\newblock Scikit-learn: Machine learning in {P}ython.
\newblock {\em Journal of Machine Learning Research}, 12:2825--2830, 2011.

\bibitem[RBDV10]{ros_bel_dev_10}
Lorenzo Rosasco, Mikhail Belkin, and Ernesto De~Vito.
\newblock On learning with integral operators.
\newblock {\em J. Mach. Learn. Res.}, 11:905--934, 2010.

\bibitem[RCR17]{rud_etal_17}
Alessandro Rudi, Luigi Carratino, and Lorenzo Rosasco.
\newblock Falkon: An optimal large scale kernel method.
\newblock In I.~Guyon, U.~V. Luxburg, S.~Bengio, H.~Wallach, R.~Fergus,
  S.~Vishwanathan, and R.~Garnett, editors, {\em Advances in Neural Information
  Processing Systems}, volume~30, pages 3888--3898. Curran Associates, Inc.,
  2017.

\bibitem[RL16]{ris_lud_16}
J.~Risk and M.~Ludkovski.
\newblock Statistical emulators for pricing and hedging longevity risk
  products.
\newblock {\em Insurance Math. Econom.}, 68:45--60, 2016.

\bibitem[RL18]{ris_lud_18}
Jimmy Risk and Michael Ludkovski.
\newblock Sequential design and spatial modeling for portfolio tail risk
  measurement.
\newblock {\em SIAM J. Financial Math.}, 9(4):1137--1174, 2018.

\bibitem[RR07]{rah_rec_07}
Ali Rahimi and Benjamin Recht.
\newblock Random features for large-scale kernel machines.
\newblock In {\em Proceedings of the 20th International Conference on Neural
  Information Processing Systems}, NIPS'07, page 1177–1184, Red Hook, NY,
  USA, 2007. Curran Associates Inc.

\bibitem[RS17]{ras_sam_2017}
Abhishake Rastogi and Sivananthan Sampath.
\newblock Optimal rates for the regularized learning algorithms under general
  source condition.
\newblock {\em Frontiers in Applied Mathematics and Statistics}, 3:3, 2017.

\bibitem[RW06]{ras_wil_2005}
CE. Rasmussen and CKI. Williams.
\newblock {\em Gaussian Processes for Machine Learning}.
\newblock Adaptive Computation and Machine Learning. MIT Press, Cambridge, MA,
  USA, January 2006.

\bibitem[RY94]{rev_yor_94}
Daniel Revuz and Marc Yor.
\newblock {\em Continuous martingales and {B}rownian motion}, volume 293 of
  {\em Grundlehren der Mathematischen Wissenschaften [Fundamental Principles of
  Mathematical Sciences]}.
\newblock Springer-Verlag, Berlin, second edition, 1994.

\bibitem[Sat99]{sat_99}
Ken-iti Sato.
\newblock {\em L\'{e}vy processes and infinitely divisible distributions},
  volume~68 of {\em Cambridge Studies in Advanced Mathematics}.
\newblock Cambridge University Press, Cambridge, 1999.
\newblock Translated from the 1990 Japanese original, Revised by the author.

\bibitem[SC08]{ste_chr_08}
Ingo Steinwart and Andreas Christmann.
\newblock {\em Support vector machines}.
\newblock Information Science and Statistics. Springer, New York, 2008.

\bibitem[SFL10]{sri_fuk_lan_10}
Bharath Sriperumbudur, Kenji Fukumizu, and Gert Lanckriet.
\newblock On the relation between universality, characteristic kernels and rkhs
  embedding of measures.
\newblock In Yee~Whye Teh and Mike Titterington, editors, {\em Proceedings of
  the Thirteenth International Conference on Artificial Intelligence and
  Statistics}, volume~9 of {\em Proceedings of Machine Learning Research},
  pages 773--780, Chia Laguna Resort, Sardinia, Italy, 13--15 May 2010. PMLR.

\bibitem[SGF{\etalchar{+}}10]{sri_etal_10}
Bharath~K. Sriperumbudur, Arthur Gretton, Kenji Fukumizu, Bernhard
  Sch\"{o}lkopf, and Gert~R.G. Lanckriet.
\newblock Hilbert space embeddings and metrics on probability measures.
\newblock {\em J. Mach. Learn. Res.}, 11:1517--1561, August 2010.

\bibitem[SS00]{smo_sch_00}
Alex~J. Smola and Bernhard Schölkopf.
\newblock Sparse greedy matrix approximation for machine learning.
\newblock pages 911--918. Morgan Kaufmann, 2000.

\bibitem[SS02]{sch_smo_02}
B.~Sch{\"o}lkopf and AJ. Smola.
\newblock {\em Learning with Kernels: Support Vector Machines, Regularization,
  Optimization, and Beyond}.
\newblock Adaptive Computation and Machine Learning. MIT Press, Cambridge, MA,
  USA, December 2002.

\bibitem[SS05]{ste_sco_2005}
Ingo Steinwart and Clint Scovel.
\newblock Fast rates for support vector machines.
\newblock In {\em Learning theory}, volume 3559 of {\em Lecture Notes in
  Comput. Sci.}, pages 279--294. Springer, Berlin, 2005.

\bibitem[SS12]{ste_sco_12}
Ingo Steinwart and Clint Scovel.
\newblock Mercer's theorem on general domains: on the interaction between
  measures, kernels, and {RKHS}s.
\newblock {\em Constr. Approx.}, 35(3):363--417, 2012.

\bibitem[SSM98]{sch_smo_mue_98}
Bernhard Sch\"{o}lkopf, Alexander Smola, and Klaus-Robert M\"{u}ller.
\newblock Nonlinear component analysis as a kernel eigenvalue problem.
\newblock {\em Neural Computation}, 10(5):1299--1319, 1998.

\bibitem[Ste02]{ste_02}
Ingo Steinwart.
\newblock On the influence of the kernel on the consistency of support vector
  machines.
\newblock {\em J. Mach. Learn. Res.}, 2(1):67--93, 2002.

\bibitem[Sun05]{sun_05}
Hongwei Sun.
\newblock Mercer theorem for {RKHS} on noncompact sets.
\newblock {\em J. Complexity}, 21(3):337--349, 2005.

\bibitem[SZ07]{sma_zho_2007}
Steve Smale and Ding-Xuan Zhou.
\newblock Learning theory estimates via integral operators and their
  approximations.
\newblock {\em Constr. Approx.}, 26(2):153--172, 2007.

\bibitem[WM97]{wol_mac_97}
D.~H. Wolpert and W.~G. Macready.
\newblock No free lunch theorems for optimization.
\newblock {\em Trans. Evol. Comp}, 1(1):67–82, April 1997.

\bibitem[WS01]{wil_see_01}
Christopher Williams and Matthias Seeger.
\newblock Using the nystr\"{o}m method to speed up kernel machines.
\newblock In T.~Leen, T.~Dietterich, and V.~Tresp, editors, {\em Advances in
  Neural Information Processing Systems}, volume~13, pages 682--688. MIT Press,
  2001.

\bibitem[WYZ06]{wu_et_al_2006}
Qiang Wu, Yiming Ying, and Ding-Xuan Zhou.
\newblock Learning rates of least-square regularized regression.
\newblock {\em Found. Comput. Math.}, 6(2):171--192, 2006.

\bibitem[WYZ07]{wu_et_al_2007}
Qiang Wu, Yiming Ying, and Ding-Xuan Zhou.
\newblock Multi-kernel regularized classifiers.
\newblock {\em J. Complexity}, 23(1):108--134, 2007.

\bibitem[WZ06]{wu_zho_2006}
Qiang Wu and Ding-Xuan Zhou.
\newblock Analysis of support vector machine classification.
\newblock {\em J. Comput. Anal. Appl.}, 8(2):99--119, 2006.

\bibitem[ZF13]{zou_fre_13}
Anastasios Zouzias and Nikolaos~M. Freris.
\newblock Randomized extended {K}aczmarz for solving least squares.
\newblock {\em SIAM J. Matrix Anal. Appl.}, 34(2):773--793, 2013.

\end{thebibliography}

\end{document}